\newcommand{\Int}{{\mathbb Z}}
\newcommand{\Var}{\mathsf{Var}}
\newcommand{\Dom}{\mathsf{Dom}}
\newcommand{\Supp}{\mathsf{Supp}}
\newcommand{\Dist}{\mathcal{D}}
\newcommand{\State}{\mathsf{State}}
\newcommand{\pto}{\rightharpoonup}
\newcommand{\Defined}{\!\downarrow}
\newcommand{\Fail}{\mathsf{fault}}
\newcommand{\rMap}[1]{\xrightarrow{#1}}
\newcommand{\dTrans}{\rMap{~~~~}}
\newcommand{\lTrans}[1]{\rMap{\,~#1~\,}}
\newcommand{\pAssgn}[2]{#1\xleftarrow{\$}#2}
\newcommand{\pTrans}[2]{\rMap{\,\pAssgn{#1}{#2}\,}}
\newcommand{\Sem}[1]{\llbracket #1 \rrbracket}
\newcommand{\Skip}{\mathsf{skip}}
\newcommand{\If}{\mathsf{if}}
\newcommand{\Then}{\mathsf{then}}
\newcommand{\Else}{\mathsf{else}}
\newcommand{\While}{\mathsf{while}}
\newcommand{\Do}{\mathsf{do}}
\newcommand{\True}{\mathsf{tt}}
\newcommand{\False}{\mathsf{ff}}
\newcommand{\RState}[2]{{#1 \! \restriction_{#2}}}
\newcommand{\PState}[2]{#1^{#2}}
\newcommand{\RPath}[2]{{#1 \! \restriction_{#2}}}
\newcommand{\Prob}{\mathsf{Pr}}
\newcommand{\Final}{\mathsf{final}}
\newcommand{\TP}{\mathcal{TP}}
\newcommand{\IP}{\mathcal{IP}}
\newcommand{\sqleq}{\sqsubseteq}
\newcommand{\Epi}{\twoheadrightarrow}
\newcommand{\Indep}{\mathop{\bot\!\!\!\!\!\;\bot}}
\newcommand{\Imp}{\rightarrow}
\newcommand{\Tau}{\mathrm{T}}
\newcommand{\MV}{\mathrm{MV}}
\newcommand{\RV}{\mathrm{RV}}
\newcommand{\WV}{\mathrm{WV}}
\newcommand{\FV}{\mathrm{FV}}
\newcommand{\Det}{\mathsf{Det}}
\newcommand{\Mask}[2]{\langle #1 \rangle #2}
\newcommand{\CM}[1]{\mathbf{C}_{\!#1}}
\begin{document}
\begin{frontmatter}
  \title{Safety, Relative Tightness \\
  and the Probabilistic Frame Rule} 						
  \author{Janez Ignacij Jereb\thanksref{a}\thanksref{b}}	
   \author{Alex Simpson\thanksref{b}\thanksref{c}\thanksref{myemail}\thanksref{support}}		
   \address[a]{Faculty of Computer and Information Science (FRI)\\ University of Ljubljana, Ljubljana, Slovenia}  
   \address[b]{Faculty of Mathematics and Physics (FMF)\\ University of Ljubljana, Ljubljana, Slovenia}  			
   \address[c]{Institute for Mathematics, Physics and Mechanics (IMFM) \\ Lljubljana, Slovenia} 
  \thanks[myemail]{Email:  \href{mailto:Alex.Simpson@fmf.uni-lj.si} {\texttt{\normalshape
       Alex.Simpson@fmf.uni-lj.si}}}
    \thanks[support]{Research supported by ARIS research programme P1 0294.}
\begin{abstract} 
Probabilistic separation logic offers an approach to reasoning about imperative probabilistic programs in which a separating conjunction is used as a mechanism for expressing independence properties. Crucial to the effectiveness of the formalism is the frame rule, which enables modular reasoning about independent probabilistic state. We explore a semantic formulation of probabilistic separation logic, in which the frame rule has the same simple formulation as in separation logic, without further side conditions. This is achieved by building a notion of safety into specifications, using which we establish a crucial property of specifications, called relative tightness, from which the soundness of the frame rule follows.
\end{abstract}
\begin{keyword}
probabilistic separation logic,
separation logic,
frame rule,
partial state,
operational semantics,
partial correctness,
total correctness,
reasoning about independence
\end{keyword} 
\end{frontmatter}

\section{Introduction}
\label{section:introduction}

Separation logic is an immensely successful formalism for the verification of 
 imperative programs involving memory manipulation~\cite{OHRY,YOH}.
 Critical to its success is the \emph{frame rule}, which allows 
 program verification to be carried out in a modular style, in which the specification and verification of subroutines
 makes reference only  to resources  local to the subroutine. 
 
Probabilistic separation logic~\cite{BHL} adapts separation logic to programs 
with randomness. A main guiding aim is to utilise the power of the 
separating conjunction $*$ of separation logic and its associated frame rule to 
provide a modular means of verifying  properties of probabilistic programs. In this setting, the formula $\Phi * \Psi$
asserts that $\Phi$ and $\Psi$ hold in \emph{probabilistically independent} parts of memory.
The original paper~\cite{BHL} established the framework for a fragment of pwhile (a simple probabilistic imperative language), and
gave many examples of verification tasks (taken from cryptography) that can be handled by the approach. 
Subsequent developments have extended the framework with more elaborate probabilistic concepts such as
negative dependencies~\cite{BGHT}, and conditional independence~\cite{BDHS}.
An interesting related development is the Lilac program logic~\cite{LAS}, which performs a
similar task for probabilistic functional programs without mutable state.
One of the advantages of the functional framework is that it allows the frame rule to have a
simple, elegant formulation.

\begin{figure}
\begin{gather*}
\left(\,\FV(\Theta) \cap \MV(C) = \emptyset \quad 
\FV(\Psi) \subseteq T \cup \RV(C) \cup \WV(C)
\quad
\models \Phi \to \mathbf{D}[T \cup \RV(C)]
\,\right)
\\
\begin{prooftree}
\{\Phi\} \, C\, \{\Psi\}
\justifies
\,\{\Phi * \Theta \} \, C\, \{\Psi * \Theta\}\,
\using{\,}
\end{prooftree}
\end{gather*}
\caption{The frame rule from~\cite{BHL}}
\label{figure:original-frame}
\end{figure}

In the original imperative setting of~\cite{BHL,BGHT}, in contrast, the probabilistic frame rule  has the more cumbersome form in Figure~\ref{figure:original-frame}, with the three side conditions written in the top line of the rule. The first side condition, which appears also in the frame rule of 
ordinary (heap) separation logic~\cite{OHRY,YOH}, requires the program $C$ not to modify the values of the free variables of the assertion  $\Theta$, a condition necessary to guarantee that the truth of $\Theta$  is preserved by the execution of $C$.  The second and third side condition are specific to probabilistic separation logic. The second requires $\Psi$ to only involve free variables that are in the union of: a chosen set of variables $T$, the set $\RV(C)$ of variables that may be read by $C$ before being written to, and the set $\WV(C)$ of variables that $C$ must write to before they are read.
The third side condition requires $\Phi$ to guarantee that all variables in $T$ and $\RV(C)$ have defined values. 

In addition to the convoluted formulation of the frame rule, another feature of the papers~\cite{BHL,BGHT} is that  significant restrictions are imposed on the 
the programs  they consider. One restriction is  that a syntactic distinction is maintained between deterministic and probabilistic variables in programs, which leads to restrictions on the positioning of assignment statements, and which is too absolute to be able to cater for program reasoning in which one needs to take  account of the local deterministic behaviour of a globally probabilistic variable. Another restriction is that loop guards are required to be deterministic. Yet another is that loops are restricted to a bounded number of iterations.

Since imperative probabilistic languages with mutable state are an important
paradigm with many applications, and probabilistic separation logic is a promising approach to verification
for such languages, it is worth investigating if some of the complications and restrictions above can be avoided.
In this paper, we propose a framework for addressing this.  
Working  with an unrestricted programming language (the full pwhile language), our main focus will be on 
understanding the frame rule and simplifying its formulation. We hope that future work will
show that our framework offers  a basis  for extending the proof rules of~\cite{BHL,BGHT} to support program verification for the unrestricted language. 
 
Following the lead of heap separation logic~\cite{IOH}, probabilistic separation logic has 
hitherto been built on the elegant and versatile framework of the logic of bunched implications~\cite{OHP,Pym}
and its resource-monoid models~\cite{POHY}. In the probabilistic case~\cite{BHL,BGHT}, one 
obtains a satisfaction relation of the form
$\Sigma \models \Phi$, where $\Sigma$ is a random state and $\Phi$ is an assertion. 
(We are deliberately simplifying the formulation to ease the discussion.) 
In order to obtain the frame rule of the present paper, we found we needed
to take a small step back from this framework. While we also have a satisfaction `relation'
of the form $\Sigma \models \Phi$, we take the  truth value of the relation as being \emph{undefined}
in the case that the random state $\Sigma$ does not provide sufficient information to determine the truth or falsity of $\Phi$.
In the case that the relation $\Sigma \models \Phi$ is defined, it assumes either the value 
$\True$ (which means $\Sigma$ satisfies $\Phi$) of $\False$ (which means
$\Sigma$ satisfies $\neg \Phi$), and indeed the laws of classical logic are validated.
Thus we avoid
one of the idiosyncrasies  of the probabilistic separation logics in the literature, namely
that they are inherently intuitionistic. 

Another minor departure from the literature is our formulation of random state.
 There are two natural ways to approach this. The standard approach in the literature is to view 
 a random state as given by a  probability distribution on states. This approach goes back to~\cite{Kozen}.
In this paper, we instead
view a random state as a 
state-valued random variable, defined in the usual way as a function from a sample space to states.
We find that this approach allows
a more intuitive and mathematically convenient formalism for working with separating conjunction and its associated frame rule. Nevertheless, it should be emphasised that the two approaches are equivalent in the sense that one can translate from either one to the other. 

Within the framework of random state as a random variable, the separating conjunction $\Phi*\Psi$ has a very straightforward meaning. It holds in a random state $\Sigma$ just when there are sets $U,V$ of program variables
such that: $\Phi$ and $\Psi$ hold in the projections $\RState{\Sigma}{U}$ and $\RState{\Sigma}{V}$ of $\Sigma$ to $U$ and $V$ respectively, 
and additionally $\RState{\Sigma}{U}$ and
$\RState{\Sigma}{V}$ are independent random variables. Departing from the literature on imperative probabilistic separation logic, we do not
impose the requirement  that the sets $U$ and $V$ be disjoint.\footnote{The disjointness requirement is also not present in the functional setting of~\cite{LAS}.}
If  these sets  do overlap, then  the independence condition forces
all variables in the overlap to be deterministic. This allows us to accommodate the sharing of deterministic variables between $\Phi$ and $\Psi$, a feature of probabilistic separation logic which is important to applications, without needing a separate syntactic class of deterministic variables.

Our approach to probabilistic separation logic will be semantic. Rather than fixing a specific syntax for the logic, we define a semantic notion of assertion, and introduce connectives as
operations on semantic assertions. This approach,
which allows us to focus on semantic principles,
is very much influenced by~\cite{YOH}, which uses a similar approach to provide a semantic underpinning for the frame rule of heap separation logic. Our development is also greatly influenced by the conceptual analysis of the frame rule in \emph{op.\ cit.}, in which it is related to two properties of specifications: \emph{safety} and \emph{tightness}. Safety means that specifications $\{\Phi\} C \{\Psi\}$ include a guarantee that if $C$ is run
from a state satisfying the precondition $\Phi$ then the execution does not fault.
Tightness means that the specification mentions all resources involved in the execution of $C$. 
In our probabilistic framework, specifications $\{\Phi\} C \{\Psi\}$ will indeed include
a \emph{safety} component: 
if a random state $\Sigma$ satisfies $\Phi$, then $C$, run from $\Sigma$, 
does not incur a memory fault.
As a consequence of this safety property, we shall
derive a precisely formulated version of  tightness, which we call \emph{relative tightness}: 
$\Phi$ necessarily specifies everything about the random state $\Sigma$ that is relevant to the 
behaviour of $C$ on all resources needed for interpreting the postcondition $\Psi$. (Here ``relative'' means relative to $\Psi$.)
In our framework, in which random states are random variables, this 
relative tightness property has an elegant formulation in terms of conditional independence
(Theorem~\ref{theorem:independence}).
The soundness of the frame rule, which appears in its  original formulation without further side conditions, then follows
from relative tightness by a pleasingly abstract argument exploiting general properties of independence and conditional independence (Corollary~\ref{corollary:frame}).

After reviewing mathematical preliminaries in Section~\ref{section:preliminaries}, we present the pwhile language in Section~\ref{section:language} together with a small-step operational semantics. In Section~\ref{section:assertions},
we introduce our semantic notion of assertion based on random state. Section~\ref{section:specifications}
treats specifications, given as Hoare triples, and defines their partial-correctness and total-correctness interpretations, which, importantly, come with  a built-in safety guarantee.
The main contributions of the paper then appear in Section~\ref{section:frame}. We present the frame rule and prove its soundness utilising the notion of relative tightness, which is formulated in Theorem~\ref{theorem:independence}. We also provide a simple counterexample showing that 
relative tightness and the soundness of the frame rule both fail if the safety guarantee of specifications is dropped.
Finally, in Section~\ref{section:proof}, 
we discuss prospects for extending the setting in this paper to a fully-fledged verification logic for the pwhile language.

\section{Mathematical preliminaries}
\label{section:preliminaries}

A \emph{(discrete probability) distribution} on a set $A$ is a function $d \colon A \to [0,1]$ that satisfies 
\[
\sum_{a \in A} d(a) ~ = ~ 1 \, .
\]
Its \emph{support} is defined by
\[
\Supp(d) ~ := ~ \{a \in A \mid d(a) > 0\} \, .
\]
The support $\Supp(d)$ is necessarily a countable set. We write $\Dist(A)$ for the set of all distributions
on $A$. A function $f: A \to B$ induces a function $f_!: \Dist(A) \to \Dist(B)$ that maps every
$d \in \Dist(A)$ to its \emph{pushforward}
\[
f_!(d)(b) ~ := ~ \sum_{a \in f^{-1}(b)} d(a) \, .
\]

For an arbitrary set $A$, an $A$-valued \emph{random variable} is a function $\Sigma : \Omega \to \State$, where $\Omega$ is a (discrete)
\emph{sample space}, namely a (necessarily countable) set equipped with a probability distribution $p_\Omega \colon \Omega \to [0,1]$, for  which we require the additional (not strictly necessary, but natural)  condition that $\Supp(d) = \Omega$; that is, there are no 
zero-probability sample-space elements. 
It will sometimes be useful to change sample space along a \emph{morphism of sample spaces},
which  is a function $q : \Omega' \to \Omega$ satisfying, $q_!(p_{\Omega'}) = p_\Omega$.
Since sample spaces have no zero-probability elements, every morphism of sample spaces is surjective. To emphasise this point in the sequel, we shall write $q : \Omega' \Epi \Omega$ for sample space morphisms.

Given a random variable $S : \Omega \to A$, we write $p_S: A \to [0,1]$ for the \emph{distribution} of $S$,
given by the pushforward probability distribution $S_!(p_\Omega)$ on $A$. We define the \emph{support} of a random variable $S$ by $\Supp(S) := \Supp(p_S)$.

Two random variables $S : \Omega \to A$ and $T : \Omega \to B$ are said to be \emph{independent} (notation $S \Indep T$) if, for all $a \in A$ and $b \in B$, it holds that $p_{(S,T)}(a,b) = p_S(a) \cdot p_T(b)$, where the left-hand side refers to the induced random variable $(S,T) : \Omega \to A \times B$. It is obvious from this definition that independence is a symmetric relation. We shall need the following standard property of independence.
\begin{equation}
\label{indep-function}
\text{If $S \Indep T$ then, for any $f: B \to C$, it holds that $S \Indep f \circ T$}\, .
\end{equation}

Given two random variables $S : \Omega \to A$ and $T : \Omega \to B$, and given $a \in \Supp(S)$,
the \emph{conditioned random variable} $T_{|S=a}: ~ \Omega_{|S=a} \to B$ is defined by
\begin{align*}
\Omega_{|S=a} ~ & := ~ \{\omega \in \Omega \mid S(\omega) = a \} \\
p_{\Omega_{|S=a}}(\omega)  ~ & := ~ \frac{p_\Omega(\omega)}{p_{S}(a)} \\[1ex]
T_{|S=a} (\omega) ~ & := ~ T(\omega) \, .
\end{align*}
Independence can be characterised in terms of conditioning, \emph{viz}
\begin{equation}
\label{independence-via-conditioning}
S \Indep T ~ \iff ~ \forall b, b' \in \Supp(T), ~ p_{S_{|T=b}} = p_{S_{|T=b'}}\, .
\end{equation}
The relation of \emph{conditional independence} of $S : \Omega \to A$ and $T : \Omega \to B$ 
\emph{given} $U : \Omega \to C$  is defined by 
\begin{equation}
\label{cond-indep-def}
S \Indep T \mid U ~~ {:\!\!\iff} ~ ~
\forall c \in \Supp(U), ~S_{|U=c} \Indep T_{|U=c} \,. 
\end{equation}
We shall make use of the following standard property of conditional independence.
\begin{equation}
\label{cond-indep-rule}
\text{If $S \Indep T \mid U$ and $S \Indep U$ then $S \Indep (T,U)$}\, .
\end{equation}

The paper will study a 
programming language pwhile, introduced in Section~\ref{section:language}, which manipulates state. For us, a \emph{state} $\sigma$ is a finite partial function $\sigma : \Var \pto \Int$,
where $\Var$ is a fixed set of program variables. Equivalently, a state is 
a total function from a finite set   $\Dom(\sigma) \subseteq \Var$, the \emph{domain} of $\sigma$, to $\Int$.
We write $\State$ for the set of states, which carries a natural partial order 
\begin{equation}
\label{eqn:state-order}
\sigma \sqleq \sigma' ~ :\Longleftrightarrow ~ \forall X \in \Var,\, \sigma(X)\Defined \implies \sigma'(X)= \sigma(X)\, ,
\end{equation}
where we write $\sigma(X)\Defined$ to mean that $X \in \Dom(\sigma)$.
A useful operation will be to \emph{restrict} a state $\sigma$ to a given finite  $V \subseteq \Var$, resulting in the state
\begin{equation}
\label{eqn:state-restriction}
\RState{\sigma}{V}(X) ~ := ~ 
\begin{cases}
\sigma(X) & \text{if $X \in V$} \\
\bot & \text{otherwise} \, ,
\end{cases}
\end{equation}
where we use $\bot$ to denote undefinedness. Note that $\RState{\sigma}{V}(X)$ is undefined for those $X \in V$ for which $\sigma(X)$ is undefined. In the special case that $\sigma(X)$ is defined for all $X \in V$ (equivalently if $\Dom(\sigma) \subseteq V$), we say that $\sigma$ is \emph{$V$-total}.

In order to model the probabilistic assertion logic of Section~\ref{section:assertions}, we need to randomise state.
A \emph{random state} is simply a $\State$-valued random variable $\Sigma : \Omega \to \State$. 
The partial order on $\State$~\eqref{eqn:state-order}
induces an associated pointwise partial order on the set of $\State$-valued random variables with common sample space $\Omega$. Suppose $\Sigma,\Sigma' : \Omega \to \State$, then
\begin{equation}
\label{eqn:random-state-order}
\Sigma \sqleq \Sigma' ~ :\Longleftrightarrow ~ \forall \omega \in \Omega,\, \Sigma(\omega) \sqleq \Sigma'(\omega) \, .
\end{equation}

The restriction 
operation on states~\eqref{eqn:state-restriction} 
induces an associated operation on random state. 
Given a random state $\Sigma: \Omega \to \State$ and finite $V \subseteq \Var$, the 
restricted random state
$\RState{\Sigma}{V}: \Omega \to \State$ is defined by:
\[ \RState{\Sigma}{V}(\omega) ~ := ~ \RState{\Sigma(\omega)}{V}\, . \]
The random state $\Sigma$ is said to be \emph{$V$-total} if $\Prob[\text{$\Sigma$ is $V$-total}] = 1$.
 In spite of appearances, this is not a circular definition. We are of course using standard notational conventions for random variables to abbreviate the property
 \[\sum_{\omega \in \Omega}  \begin{cases} 
p_\Omega(\omega) & \text{$\Sigma(\omega)$ is $V$-total} \\
0 & \text{otherwise.}
\end{cases} 
~ = ~ 1 \, .\]
In the case that $\Sigma$ is $\{X\}$-total, we write $\PState{\Sigma}{X} : \Omega \to \Int$ for the random integer
\begin{equation}
\label{eqn:Sigma-X}
\PState{\Sigma}{X}( \omega) ~ :=  ~ \Sigma(\omega)(X) \, .
\end{equation}

\section{Language}
\label{section:language}


We introduce a version of the simple probabilistic while language pwhile. We use $X,Y,Z\dots$ to range over
an infinite set $\Var$ of integer variables. We use $E, \dots$ to range over integer expressions and $B, \dots$ to range over boolean expressions. Such expressions are built as usual using  variables, basic arithmetic operations, basic relations and logical connectives. We leave the precise syntax open. In the programming language, we also include \emph{distribution expressions}, which denote probability distributions over the integers. 
Distribution expressions may contain integer subexpressions, allowing us to define distributions parametrically.
For example, we might have an expression $\mathsf{uniform}(X)$, denoting the uniform probability distribution on
the integer interval $[\min(0,X),\max(0,X)]$. We use $D, \dots$ to range over distribution expressions. Again, we leave the precise syntax unspecified. With the above conventions, the syntax of commands is given by
\begin{align*}
C ~~ ::= ~~ & X := E \mid \pAssgn{X}{D} \mid \Skip \mid C ; C \mid \If\ B\ \Then\ C\ \Else\ C \mid \While\ B\ \Do\ C \, ,
\end{align*}
where the command $\pAssgn{X}{D}$ randomly samples from the distribution $D$ and assigns the returned value to the variable $X$.

We define a small-step operational semantics, by specifying a probabilistic transition system, with mid-execution transitions between configurations of the form $C, \sigma$, where $\sigma \in \State$ as defined in
Section~\ref{section:preliminaries} Execution may terminate in a terminal configuration $\sigma$.
It may also abort, due to a memory fault, in the  $\Fail$ error node.
The transition system thus has the following nodes.
\begin{itemize}
\item \emph{Nonterminal configurations:} $C,\sigma$, where $C$ is a command and $\sigma$ a state.
\item \emph{Terminal configurations:} $\sigma$, where $\sigma$ is a state.
\item \emph{The error node:}  $\Fail$.
\end{itemize}
There are two types of transition.
\begin{itemize}
\item \emph{Deterministic transitions:}  unlabelled transitions  $u \dTrans v$, where $u$ is nonterminal.
\item \emph{Probabilistic transitions:} labelled transitions $u \pTrans{n}{d} v$, where $u$ is nonterminal, $d$ is a probability distribution on $\Int$ and $n \in \Supp(d)$.
\end{itemize}
The single-step transitions between nodes are determined inductively using the rules in Figure~\ref{fig:op-sem}.
In these rules, we write $\Var(E)$ for the set of variables appearing in an integer expression $E$, and we use similar notation for boolean expressions $B$ and distribution expressions $D$. 
The semantic interpretation $\Sem{E}_\sigma$ of an integer expression $E$, relative to a state $\sigma$, 
is defined if and only if
$\Var(E) \subseteq \Dom(\sigma) $. Similarly, for boolean  and distribution expressions, $\Sem{B}_\sigma$ and
$\Sem{D}_\sigma$
are defined
if and only 
$\Var(B) \subseteq \Dom(\sigma)$ and $ \Var(D) \subseteq \Dom(\sigma)$ respectively. When defined, we have  $\Sem{E}_\sigma \in \Int$, 
$\,\Sem{B}_\sigma \in \{\True,\False\}$ (our notation for truth values) and $\Sem{D}_\sigma \in \Dist(\Int)$.
Whenever  an expression is evaluated in a state in which some of its variables do not have a value, a memory fault is triggered. In the two rules involving variable assignment, $\sigma[X \mapsto n]$ is used for the state that 
maps $X$ to $n$, and agrees with $\sigma$ (which may or may not contain $X$ in its domain) otherwise.The other notational convention in
Figure~\ref{fig:op-sem} is that $L$ is a meta-variable for transition labels, which may be empty in the case of deterministic transitions, or of the form $\pAssgn{n}{d}$ in the case of probabilistic transitions.

\begin{figure}
\begin{gather*}
\begin{prooftree}
\justifies
X := E\, , \, \sigma  \dTrans \sigma [X \mapsto n]
\using{\Sem{E}_\sigma = n}
\end{prooftree}
\qquad
\begin{prooftree}
\justifies
X := E\, , \, \sigma  \dTrans \Fail
\using{\Var(E) \not\subseteq \Dom(\sigma)}
\end{prooftree}
\\[2ex]
\begin{prooftree}
\justifies
\pAssgn{X}{D}\, , \, \sigma  \pTrans{n}{d} \sigma [X \mapsto n]
\using{\Sem{D}_\sigma = d\, ,\, n \in \Supp(d)}
\end{prooftree}
\qquad
\begin{prooftree}
\justifies
\pAssgn{X}{D}\, , \, \sigma  \dTrans \Fail
\using{\Var(D) \not\subseteq \Dom(\sigma)}
\end{prooftree}
\\[2ex]
\begin{prooftree}
\justifies
\Skip,\sigma  \dTrans \sigma
\end{prooftree}
\\[2ex]
\begin{prooftree}
C_1, \sigma \lTrans{L} C_1', \sigma'
\justifies 
C_1;C_2\, , \, \sigma \lTrans{L} C_1';C_2 \, , \,\sigma'
\end{prooftree}
\qquad
\begin{prooftree}
C_1, \sigma \lTrans{L} \sigma'
\justifies 
C_1;C_2\, , \, \sigma \lTrans{L} C_2 \, , \,\sigma'
\end{prooftree}
\qquad
\begin{prooftree}
C_1, \sigma \dTrans \Fail
\justifies 
C_1;C_2\, , \, \sigma \dTrans \Fail
\end{prooftree}
\\[2ex]
\begin{prooftree}
\justifies 
\If\ B\ \Then\ C_1\ \Else\ C_2\, , \, \sigma \dTrans C_1, \sigma
\using{\Sem{B}_\sigma = \True}
\end{prooftree}
\qquad
\begin{prooftree}
\justifies 
\If\ B\ \Then\ C_1\ \Else\ C_2\, , \, \sigma \dTrans C_2, \sigma
\using{\Sem{b}_\sigma = \False}
\end{prooftree}
\\[2ex]
\begin{prooftree}
\justifies 
\If\ B\ \Then\ C_1\ \Else\ C_2\, , \, \sigma \dTrans \Fail
\using{\Var(b) \not\subseteq \Dom(\sigma)}
\end{prooftree}
\\[2ex]
\begin{prooftree}
\justifies 
\While\ B\ \Do\ C \, , \, \sigma \dTrans \sigma
\using{\Sem{B}_\sigma = \False}
\end{prooftree}
\qquad
\begin{prooftree}
\justifies 
\While\ B\ \Do\ C\, , \, \sigma \dTrans C\, ; \While\ B\ \Do\ C \, , \, \sigma
\using{\Sem{B}_\sigma = \True}
\end{prooftree}
\\[2ex]
\begin{prooftree}
\justifies 
\While\ B\ \Do\ C\, , \, \sigma \dTrans \Fail
\using{\Var(b) \not\subseteq \Dom(\sigma)}
\end{prooftree}
\end{gather*}
\caption{Operational Semantics}
\label{fig:op-sem}
\end{figure}

Every transition path from a nonterminal configuration $C_0, \sigma_0$ to a terminal configuration  has the form 
\begin{equation}
\label{eqn:terminal-path}
C_0,\sigma_0 \lTrans{L_0} C_1,\sigma_1  \lTrans{L_1}  \cdots\ C_{n-1},\sigma_{n-1} \lTrans{L_{n-1}} \tau \, ,
\end{equation}
for some $n \geq 1$, where each label $L_i$ is either empty or of the form $\pAssgn{n_i}{d_i}$. We call such 
paths \emph{terminal paths}.  The  \emph{probability} and the \emph{final state} of a terminal path $\pi$, of the form~\eqref{eqn:terminal-path}, are defined by
\begin{align}
\label{prob-pi}
\Prob(\pi) ~ := ~  & \prod_{i < n}  \begin{cases} d_i(n_i) & \text{if $L_i$ is $\pAssgn{n_i}{d_i}$} \\
1 & \text{if $L_i$ is empty\, ,}
\end{cases}
\\
\nonumber
\Final(\pi) ~ := ~  & \tau \, .
\end{align}
We write $\TP_{C, \sigma}$ for the set of all terminal paths from initial configuration $C,\sigma$.
The \emph{termination probability} for $C, \sigma$ is defined by
\[
\Prob[\text{$C,\sigma$ terminates}] ~ := ~ \sum_{\pi \, \in\, \TP_{C,\sigma}} \Prob(\pi)\, .
\]
We say that $C,\sigma$ is \emph{terminating} if $\Prob[\text{$C,\sigma$ terminates}] = 1$. This is the form of termination known as almost sure termination. We shall not concern ourselves with other forms of termination. A terminating configuration $C,\sigma$ determines a probability distribution on result states 
$p_{C,\Sigma} : \State \to [0,1]$, defined by:
\begin{equation}
\label{terminal-dist}
p_{C,\Sigma} (\tau) ~ := ~ \sum_{\pi \, \in\, \TP_{C,\sigma}} \Prob(\pi) \cdot \delta_{\tau,\, \Final(\pi)} \enspace ,
\end{equation}
where $\delta$ is the Kr\"onecker delta.

We consider \emph{divergence} to be the computational behaviour that follows an infinite path  of nonterminal configurations 
\begin{equation}
\label{eqn:infinite-path}
C_0,\sigma_0 \lTrans{L_0} C_1,\sigma_1  \lTrans{L_1}  \ C_{1},\sigma_{1} \lTrans{L_2} \cdots .
\end{equation}
We write $\IP_{C_0, \sigma_0}$ for the set of all such infinite paths from initial configuration $C_0,\sigma_0$.
We write $\RPath{\IP_{C_0, \sigma_0}}{n}$ for the set of $n$-length prefixes of paths in $\IP_{C_0, \sigma_0}$,
where the length is the number of transitions.
The \emph{divergence probability} for $C_0, \sigma_0$ is defined by
\begin{equation}
\label{eqn:divergence-prob}
\Prob[\text{$C_0,\sigma_0$ diverges}] ~ := ~ 
\lim_{n \to \infty} \sum_{\gamma\, \in \,\RPath{\IP_{C_0, \sigma_0}}{n}} \Prob(\gamma) \, ,
\end{equation}
where $\Prob(\gamma)$ is calculated analogously to~\eqref{prob-pi} above.
It is always the case that
\begin{equation}
\label{terminate-inequality}
\Prob[\text{$C_0,\sigma_0$ terminates}] + \Prob[\text{$C_0,\sigma_0$ diverges}] ~ \leq ~ 1 \, .
\end{equation}
The inequality is an equality precisely in the case that the configuration $C_0,\sigma_0$ is 
\emph{fault-free}, meaning that  it is 
not the case that $C,\sigma \dTrans^* \Fail$, where 
we write $\dTrans^*$ for the transitive-reflexive closure of the single-step transition relation, including both labelled and unlabelled transitions in the relation. Note that terminating configurations are \emph{a fortiori} fault-free.


\section{Assertions}
\label{section:assertions}

Rather than focusing on a specific  assertion logic, we work with a general semantic notion of assertion. To motivate this, we
give some examples of assertions that fit into the framework. A first example is
\[
X \sim d \, ,
\]
where $d$ is some probability distribution on $\Int$. This says that the value of the program variable $X$ is distributed probabilistically according to $d$. Another simple example is
\[
[X = Y] \, .
\]
Since the values of variables are in general randomly distributed, we interpret the above equality as stating that $X$ and $Y$ take the same value with probability $1$; that is, considered as random variables, $X$ and $Y$ are {almost surely equal}.

In order to accommodate randomness in the semantics of assertions, 
our main satisfaction relation will have the form $\Sigma \models \Phi$, relating random states $\Sigma$ and 
semantic assertions $\Phi$. However, we add one further ingredient to the picture. 
We allow the truth value of the `relation' 
$\Sigma \models \Phi$ to be undefined for certain ineligible random states $\Sigma$. This possibility of undefinedness is included as  a natural mathematical
mechanism  for  dealing with the partiality of state. For example, we consider $\Sigma \models [X = Y]$ to be defined if and only if the random state $\Sigma$ assigns values to both $X$ and $Y$ with probability one; that is,
if $\Sigma$ is $\{X,Y\}$-total.
We write $(\Sigma \models \Phi)\Defined$ to mean that the value of the satisfaction relation 
$\Sigma \models \Phi$ is defined, in which case we either have
$(\Sigma \models \Phi) = \True$ or $(\Sigma \models \Phi) = \False$.
Again, in the case that $\Phi$ is the assertion $[X = Y]$, we have 
 that 
$(\Sigma \models [X=Y])\Defined$ if and only  if
$\PState{\Sigma}{X}, \PState{\Sigma}{Y} : \Omega \to \Int$, as in~\eqref{eqn:Sigma-X}, are well-defined. 
Then $(\Sigma \models [X=Y]) = \True$ if and only if $\Prob[\Sigma_X = \Sigma_Y] = 1$, or equivalently,
because there are no zero-probability sample-space elements, 
if the functions $\PState{\Sigma}{X}$ and $\PState{\Sigma}{Y}$ are equal. Similarly, $(\Sigma \models [X=Y]) = \False$ if and only if $\Prob[\PState{\Sigma}{X} = \PState{\Sigma}{Y}] < 1$, or equivalently 
$\PState{\Sigma}{X} \neq \PState{\Sigma}{Y}$. This and many other similar examples illustrate that allowing undefinedness enables us to make a natural distinction between the property expressed by $\Phi$ not making sense in the context of $\Sigma$, in which case $\Sigma \models \Phi$ is undefined,
and the case in which the property does make sense, in which case  $\Sigma \models \Phi$
possesses an actual truth value. 

We now precisely define our semantic notion of  assertion. A \emph{semantic assertion} $\Phi$ is given by a \emph{partial} function from random states to the set of truth values  $\{\True,\False\}$.
We write this function as 
\[ \Sigma ~ \mapsto ~ (\Sigma \models \Phi) \, . \]
The function is required to satisfy three conditions. 
\begin{description}
\item[(SA1)]
$\Sigma \sqleq \Sigma'$ and  $(\Sigma \models \Phi)\Defined$ implies 
$(\Sigma \models \Phi) = (\Sigma' \models \Phi)$.
(The $\sqleq$ relation is defined in~\eqref{eqn:random-state-order}.)


\item[(SA2)] $\Phi$ has an associated finite set $\FV(\Phi) \subseteq \Var$, its \emph{footprint variables},%
\footnote{When assertions are given syntactically as formulas, $\FV(\Phi)$  can be understood as the 
\emph{free variables} of the formula $\Phi$. For semantic assertions, we prefer the more neutral terminology \emph{footprint variables}, which is motivated by thinking of $\FV(\Phi)$ as the memory footprint associated with $\Phi$.}
satisfying
\[
(\Sigma \models \Phi) \Defined ~~  \iff ~~ 
\text{$\Sigma$ is $\FV(\Phi)$-total.}
\]

\item[(SA3)] If $q : \Omega' \Epi \Omega$ is a morphism of sample spaces then
$(\Sigma \circ q \models \Phi) = (\Sigma \models \Phi)$.

\end{description}
One consequence of (SA2) is that it prevents the everywhere undefined function
$(\Sigma \mapsto \bot)$ from being a semantic assertion, which is no loss. 
Secondly, (SA2) determines $\FV(\Phi)$ uniquely. Indeed, suppose
we have two finite $U,V \subseteq \Var$ such that
$(\Sigma \models \Phi) \Defined$ iff $\Sigma$ is $U$-total iff $\Sigma$ is $V$-total.
Let $\Sigma$ be some random state such that $(\Sigma \models \Phi) \Defined$. (Such a state exists, by the 
first observation about (SA2) above.) Since $\Sigma$ is $U$-total, so is $\RState{\Sigma}{U}$, and hence
$(\RState{\Sigma}{U} \models \Phi) \Defined$. This means that $\RState{\Sigma}{U}$ is $V$-total, which entails 
$V \subseteq U$.  A symmetric argument establishes $U \subseteq V$. Hence $U = V$. 

In combination, (SA1) and (SA2) have the following simple consequence that we shall use frequently.
\begin{equation}
\label{eqn:SAA-SAB}
(\Sigma \models \Phi) \Defined ~~ \implies ~~
{(\RState{\Sigma}{\FV(\Phi)} \models \Phi)\Defined}~ \text{and}~  (\Sigma \models \Phi) = (\RState{\Sigma}{\FV(\Phi)} \models \Phi) \, .
\end{equation}
We remark also that (SA3) is equivalent to saying that, for any assertion $\Phi$,
the value of $\Sigma \models \Phi$ depends only on the probability distribution $p_\Sigma$. Thus
the logic could perfectly well be given an equivalent formulation in terms of distributions on $\State$ rather than random state, as has been the norm hitherto in the literature on probabilistic separation logic~\cite{BHL,BGHT}.

We give several examples of semantic assertions and constructions on them, to illustrate the broad scope of the notion. 
\begin{itemize}

\item If $B$ is a boolean expression then the assertion $[B]$ is defined by
\begin{gather*}
\FV([B]) ~  := ~ \Var(B)  \\
\Sigma \models {[B]} ~ \, {:\!\iff} ~ \Prob[\Sem{B}_\Sigma = \True] = 1\, .
\end{gather*}
This definition illustrates the general style we shall follow below. The first clause specifies the set of footprint variables, which determines when the value of $(\Sigma \models [B])$ is defined. The second clause,
is then only considered in the case that $(\Sigma \models [B])\Defined\,$. 

\item A special case of the above is the assertion $[E_1 = E_2]$, for integer expressions $E_1,E_2$.

\item If $E$ and $D$ are integer and distribution expressions respectively then the assertion $E \sim D$ is defined by
\begin{gather*}
\FV(E \sim D) ~  := ~ \Var(E) \cup \Var(D) \\
\Sigma \models {E\sim D} ~ \, {:\!\iff} ~ 
\forall d \in \Supp(\Sem{D}_\Sigma), ~ p_{\,{(\Sem{E}_\Sigma)}_{|\Sem{D}_\Sigma = d}} = d \, .
\end{gather*}
Here, the expression $\Sem{D}_\Sigma$ is considered as the distribution-valued random variable $\omega \mapsto \Sem{D}_{\Sigma(\omega)}$, and the
property asserts that the distribution of $E$, conditional on any possible distribution $d$ arising from $D$, is itself $d$.

\item If $E$ is an integer expression then the assertion $\Det(E)$ is defined by
\begin{gather*}
\FV(\Det(E)) ~  := ~ \Var(E) \\
\Sigma \models {\Det(E)} ~ \, {:\!\iff} ~ 
\exists n \in \Int, ~ \Prob[\Sem{E}_\Sigma = n] = 1 \, . 
\end{gather*}
Informally: $E$ is deterministic.

\item If $\Phi$ is a semantic assertion then so is $\neg \Phi$ defined classically by
\begin{gather*}
\FV(\neg \Phi) ~  := ~ \FV(\Phi) \\
\Sigma \models {\neg \Phi} ~ \, {:\!\iff} ~ \Sigma \not \models \Phi
 \, . 
\end{gather*}
We emphasise that the second line only applies in the case that
$(\Sigma \models {\neg \Phi})\Defined$.

\item If $\Phi, \Psi$ are semantic assertions then 
$\Phi \vee \Psi$, $~\Phi \wedge \Psi$ and $\Phi \Imp \Psi$,
are defined in the standard classical way, for example
\begin{gather*}
\FV(\Phi \Imp \Psi) ~  := ~ \FV(\Phi) \cup \FV(\Psi)\\
\Sigma \models {\Phi \Imp \Psi} ~ \, {:\!\iff} ~ \text{$\Sigma \not\models \Phi$ or $\Sigma \models \Psi$,}
 \, . 
\end{gather*}

\item If $E$ is an integer expression and $\Phi$ is a semantic assertion then
so is $\CM{E} \,\Phi$ defined by
\begin{gather*}
\FV(\CM{E}\, \Phi) ~  := ~ \Var(E) \cup \FV(\Phi) \\
\Sigma \models {\CM{E}\, \Phi} ~ \, {:\!\iff} ~ \forall n \in \Supp(\Sem{E}_{\Sigma}), ~ 
\Sigma_{|\Sem{E}_{\Sigma} = n} \models \Phi \, . 
\end{gather*}
Here $\CM{E}$ is essentially the \emph{conditioning modality} introduced in~\cite{LAS}, which we model by conditioning the random state~$\Sigma$, using the conditioning operation on random variables from
Section~\ref{section:preliminaries}.

\item
If $\Phi, \Psi$ are semantic assertions then so is $\Phi * \Psi$ defined by
\begin{gather*}
\FV(\Phi * \Psi) ~  := ~ \FV(\Phi) \cup \FV(\Psi)\\
\Sigma \models \Phi * \Psi ~  {:\! \iff} ~ 
\Sigma \models \Phi\ \text{and}\ \Sigma \models \Psi\ \text{and}\ \RState{\Sigma}{\FV(\Phi)} \Indep \RState{\Sigma}{\FV(\Psi)} \, ,
\end{gather*}
\end{itemize}

\noindent
The last assertion above is of course the separating conjunction, which will play a critical role in the frame rule. One might expect the  second clause in its definition to have a different form; for example,
\begin{equation}
\label{star-B}
\RState{\Sigma}{\FV(\Phi)} \models \Phi\ \text{and}\ \RState{\Sigma}{\FV(\Phi)} \models \Psi\ \text{and}\ \RState{\Sigma}{\FV(\Phi)} \Indep \RState{\Sigma}{\FV(\Psi)} \, ;
\end{equation}
but this is equivalent by~\eqref{eqn:SAA-SAB}. 
Another equivalent is given by 
\begin{equation}
\label{star-C}
\exists U, V \subseteq \Var, ~
\RState{\Sigma}{U} \models \Phi\ \text{and}\ \RState{\Sigma}{V} \models \Psi\ \text{and}\ 
\RState{\Sigma}{U} \Indep \RState{\Sigma}{V} \, .
\end{equation}
Indeed, it is immediate that~\eqref{star-B} implies~\eqref{star-C}.
For the converse, suppose we have $U,V$ as in~\eqref{star-C}.
Since $\RState{\Sigma}{U} \models \Phi$ and $\RState{\Sigma}{V} \models \Psi$, we have,
by (SA2), that $\FV(\Phi) \subseteq U$ and $\FV(\Psi) \subseteq V$. So
$\RState{\Sigma}{\FV(\Phi)} \models \Phi$, by~\eqref{eqn:SAA-SAB},
because $\RState{\Sigma}{\FV(\Phi)} = \RState{\RState{\Sigma}{U\!}}{\FV(\Phi)}$.
Similarly, $\RState{\Sigma}{\FV(\Phi)} \models \Psi$.
Finally, since, if we compose $\sigma \mapsto \RState{\sigma}{\FV{\Phi}}$
with the left-hand side and $\sigma \mapsto \RState{\sigma}{\FV{\Psi}}$ with the right-hand side
 of the independence statement $\RState{\Sigma}{U} \Indep \RState{\Sigma}{V}$, then
it follows from~\eqref{indep-function} (and its symmetric version) that
$\Sigma_{\FV(\Phi)} \Indep \Sigma_{\FV(\Psi)}$.

As a final comment about $*$, we remark that it is possible for 
$\Sigma \models \Phi * \Psi$ to hold in cases in which $\FV(\Phi) \cap \FV(\Psi) \neq \emptyset$.
In such cases, by composing both sides of $\RState{\Sigma}{\FV(\Phi)} \Indep \RState{\Sigma}{\FV(\Psi)}$
with the function $\sigma \mapsto \RState{\sigma}{V}$, where $V := \FV(\Phi) \cap \FV(\Psi)$, it
follows from~\eqref{indep-function} that $\RState{\Sigma}{V} \Indep \RState{\Sigma}{V}$.
This means that $\RState{\Sigma}{V}$ is a deterministic random variable. That is, there exists
a $V$-defined state $\sigma$ such that $\Prob[\RState{\Sigma}{V} = \sigma] = 1$.
Thus our formulation of the separating conjunction, automatically entails the property that the two sides $\Phi$ and $\Psi$ can share deterministic variables, which in the original probabilistic separation logics is built-in 
via a syntactic distinction between deterministic and probabilistic variables~\cite{BHL,BGHT}.

\section{Specifications}
\label{section:specifications}

%
%
A \emph{specification} is given by a Hoare triple
\[
\{\Phi\} \, C \, \{\Psi\} \, ,
\]
where $\Phi$ and $\Psi$ are semantic assertions, and $C$ is a command from the programming language of Section~\ref{section:language}.
As is standard, we shall have two forms of correctness for specifications, \emph{partial} and \emph{total}, whose precise form, in which \emph{safety} is built in, is influenced by heap separation logic~\cite{OHRY,YOH}. Partial correctness will say that if $C$ is run in any random state $\Sigma$ for which $\Sigma \models \Phi$, then the execution of $C$ is fault-free (this is the safety component) and, if $C$ almost surely terminates, then 
$\Tau \models \Psi$, where $\Tau$ is the induced random state in which $C$ terminates. Total correctness more simply says that if $C$ is run in any random state $\Sigma$, for which $\Sigma \models \Phi$, then $C$ almost surely terminates (hence is \emph{a fortiori} fault-free), and $\Tau \models \Psi$, where $\Tau$ is the termination random state. To make this precise, we have to define the random state in which a terminating program terminates.

Let $\Sigma : \Omega \to \State$ be a random state. We view the pair $C, \Sigma$ as a random configuration (with deterministic first component). The notions of fault-freeness and termination for 
deterministic configurations, defined in Section~\ref{section:language}, extend to random configurations 
in the natural way. Specifically, 
we say that $C,\Sigma$ is \emph{fault-free} if
$\Prob[\text{$C, \Sigma$ is fault-free}] = 1$,%
\footnote{Because $\Supp(p_\Omega) = \Omega$,
this is equivalent to the probability-free: for every $\omega \in \Omega$, the configuration $C, \Sigma(\omega)$ is fault-free. However, the probabilistic formulation is preferable, since it is independent of the design choice that $\Supp(p_\Omega) = \Omega$.}
and is 
\emph{terminating} if $\Prob[\text{$C, \Sigma$ is terminating}] = 1$.
\begin{align*}
\Omega_{C,\Sigma} ~ := ~ & \{ (\omega, \pi) \mid 
\text{$\omega \in \Omega$ and $\pi$ is a terminal path from $C,\Sigma(\omega)$}\} \\
p_{\Omega_{C,\Sigma}} (\omega,\pi) ~ := ~ & p_\Omega(\omega) \cdot \Prob(\pi)\, .
\end{align*}
Note that it is because of the condition that $C,\Sigma$ is terminating that $p_{\Omega_{C,\Sigma}}$ is a probability distribution on $\Omega_{C,\Sigma}$. Also note that, for the same reason, the projection function $q_{C,\Sigma} : \Omega_{C,\Sigma} \Epi \Omega$, defined by
$q_{C,\Sigma}(\omega, \pi) := \omega$, is a morphism of sample spaces.
Define $\Tau_{C,\Sigma} : \Omega_{C,\Sigma} \to \State$ by 
\[
\Tau_{C,\Sigma} (\omega,\pi) ~ := ~ \Final(\pi) \, .
\]
It is $\Tau_{C,\Sigma}$ that provides the desired termination random state of the random configuration
$C, \Sigma$.

\begin{definition}[Partial correctness]
\label{definition:partial-correctness}
A specification $\{\Phi\} \, C \, \{\Psi\}$ is \emph{partially correct} if, for every random state $\Sigma: \Omega \to \State$ for which $\Sigma \models \Phi$ we have:
\begin{itemize} 
\item $C, \Sigma$ is fault-free (the \emph{safety guarantee}), and 
\item if $C, \Sigma$ is terminating then $\Tau_{C,\Sigma} \models \Psi$.
\end{itemize}
\end{definition}

\begin{definition}[Total correctness]
A specification $\{\Phi\} \, C \, \{\Psi\}$ is \emph{totally correct} if, for every random state $\Sigma: \Omega \to \State$ for which $\Sigma \models \Phi$ we have:
\begin{itemize} 
\item  $C, \Sigma$ is terminating and $\Tau_{C,\Sigma} \models \Psi$.
\end{itemize}
\end{definition}
\noindent
It is trivial that every totally correct specification is partially correct. 

\section{The frame rule}
\label{section:frame}

\begin{figure}
\[
\begin{prooftree}
\{\Phi\} \, C\, \{\Psi\}
\justifies
\,\{\Phi * \Theta \} \, C\, \{\Psi * \Theta\}\,
\using{\,\FV(\Theta) \cap \MV(C) = \emptyset}
\end{prooftree}
\]
\caption{The frame rule}
\label{figure:frame-rule}
\begin{align*}
\MV(\pAssgn{X}{D}) ~ & := ~ \{X\}  &  \MV(C_1;C_2) ~ & := ~ \MV(C_1) \cup \MV(C_2) \\
\MV(X := E) ~ & := ~ \{X\} & \MV(\If\ B\ \Then\ C_1\ \Else\ C_2) ~ & := ~  \MV(C_1) \cup \MV(C_2) \\
\MV(\Skip)~ & := ~  \emptyset & \MV(\While\ B\ \Do\ C) ~ & := ~  \MV(C)\, .
\end{align*}
\caption{Modified variables $\MV(C)$}
\label{figure:MV}
\end{figure}

The frame rule is given in Figure~\ref{figure:frame-rule}. 
Its formulation is identical to the 
original frame rule formulation in heap separation logic~\cite{YOH}.
The single side condition involves the set 
$\MV(C)$ of all variables that the program $C$ is able to modify, which is defined (in the obvious way) in Figure~\ref{figure:MV}. The lemmas below simply state that variables outside $\MV(C)$ do not change during the execution of $C$. The statement we need later is given by Lemma~\ref{lemma:MV-random}, which asserts this for random state. This is in turn a simple consequence of Lemma~\ref{lemma:MV-deterministic}, which states the property for deterministic state. The straightforward proofs are omitted.
\begin{lemma}
\label{lemma:MV-deterministic}
If $V \subseteq \Var$ is such that $V \cap \MV(C) = \emptyset$, and if $C, \sigma \dTrans^* \tau$ then
$\RState{\tau}{V} = \RState{\sigma}{V}\,$.
\end{lemma}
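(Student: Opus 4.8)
The plan is to reduce the statement to a single-step property and then chain it along the transition path. First I would establish the following \emph{single-step lemma}: whenever $C,\sigma \lTrans{L} v$ is a valid transition ($L$ possibly empty) whose target $v$ is not $\Fail$, then, writing $\sigma'$ for the state component of $v$ (so $v$ is either a nonterminal configuration $C',\sigma'$ or a terminal configuration $\sigma'$), we have (i) $\MV(C') \subseteq \MV(C)$ in the case $v = C',\sigma'$, and (ii) $\RState{\sigma'}{W} = \RState{\sigma}{W}$ for every $W \subseteq \Var$ with $W \cap \MV(C) = \emptyset$. This is proved by induction on the derivation of the transition, i.e.\ by a case analysis following the rules of Figure~\ref{fig:op-sem}. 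The atomic cases $\Skip$, the $\If$-tests and the $\While$-tests leave $\sigma$ untouched, so (ii) is immediate; (i) is either vacuous (for $\Skip$ and the non-looping cases) or follows from $\MV(C_1),\MV(C_2) \subseteq \MV(\If\ B\ \Then\ C_1\ \Else\ C_2)$ and from the computation $\MV(C\,;\While\ B\ \Do\ C) = \MV(C) \cup \MV(C) = \MV(\While\ B\ \Do\ C)$. The assignment cases $X := E$ and $\pAssgn{X}{D}$ change $\sigma$ only at $X$, and $X \in \MV(C)$ by definition of $\MV$, so for $W$ disjoint from $\MV(C)$ the restriction to $W$ is unchanged. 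The sequential composition cases follow from the induction hypothesis together with $\MV(C_1),\MV(C_2) \subseteq \MV(C_1;C_2)$.

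With the single-step lemma in hand, I would prove the stated lemma by induction on the length $n \geq 1$ of the path $C,\sigma \dTrans^* \tau$ (length $0$ is impossible since $\tau$ is terminal while $C,\sigma$ is not). For $n = 1$, the path is a single transition $C,\sigma \lTrans{L} \tau$, and part (ii) of the single-step lemma with $W := V$ gives $\RState{\tau}{V} = \RState{\sigma}{V}$. For the inductive step, decompose a length-$(n{+}1)$ path as a single transition $C,\sigma \lTrans{L} C',\sigma'$ followed by a length-$n$ path $C',\sigma' \dTrans^* \tau$. Part (ii) of the single-step lemma gives $\RState{\sigma'}{V} = \RState{\sigma}{V}$, and part (i) gives $\MV(C') \subseteq \MV(C)$, whence $V \cap \MV(C') = \emptyset$; the induction hypothesis applied to $C',\sigma' \dTrans^* \tau$ then yields $\RState{\tau}{V} = \RState{\sigma'}{V}$, and composing the two equalities closes the case.

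There is no genuine obstacle here; the argument is routine bookkeeping, which is presumably why the paper omits it. The only points that require attention are (a) carrying the auxiliary invariant $\MV(C') \subseteq \MV(C)$ along the induction so that the disjointness hypothesis $V \cap \MV(\cdot) = \emptyset$ propagates down the path, and (b) checking that the definition of $\MV$ in Figure~\ref{figure:MV} is compatible with each structural rule of the operational semantics — in particular that unfolding a $\While$ loop into $C\,;\While\ B\ \Do\ C$ does not enlarge the set of modified variables, and that the continuation of a sequential composition has modified variables contained in those of the whole.
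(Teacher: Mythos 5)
Your proposal is correct and is precisely the routine induction the paper has in mind when it omits the proof: a single-step lemma proved by case analysis on the rules of Figure~\ref{fig:op-sem}, carrying the auxiliary invariant $\MV(C') \subseteq \MV(C)$ (in particular that unfolding a loop to $C\,;\While\ B\ \Do\ C$ does not enlarge $\MV$), followed by induction on the length of the path. No gaps; the two points you flag are indeed the only ones needing care, and you handle both.
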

\begin{lemma}
\label{lemma:MV-random}
If $V \subseteq \Var$ is such that $V \cap \MV(C) = \emptyset$, and if $C,\Sigma$ is a terminating random configuration then $\RState{\Tau_{C,\Sigma}}{V} = \RState{\Sigma}{V} \circ q_{C,\Sigma}\,$.
\end{lemma}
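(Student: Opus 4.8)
The plan is to derive Lemma~\ref{lemma:MV-random} directly from the deterministic version, Lemma~\ref{lemma:MV-deterministic}, by unwinding the definitions of the termination random state $\Tau_{C,\Sigma}$ and of the sample-space morphism $q_{C,\Sigma}$. Recall that $\Tau_{C,\Sigma} : \Omega_{C,\Sigma} \to \State$ is given by $\Tau_{C,\Sigma}(\omega,\pi) = \Final(\pi)$, that $q_{C,\Sigma}(\omega,\pi) = \omega$, and that an element $(\omega,\pi) \in \Omega_{C,\Sigma}$ consists of a sample point $\omega \in \Omega$ together with a terminal path $\pi$ from the deterministic configuration $C,\Sigma(\omega)$. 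The claimed equation $\RState{\Tau_{C,\Sigma}}{V} = \RState{\Sigma}{V} \circ q_{C,\Sigma}$ is an equality of functions $\Omega_{C,\Sigma} \to \State$, so it suffices to check it pointwise at an arbitrary $(\omega,\pi) \in \Omega_{C,\Sigma}$.

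So, first I would fix $(\omega,\pi) \in \Omega_{C,\Sigma}$. By definition of $\Omega_{C,\Sigma}$, the path $\pi$ is a terminal path from $C, \Sigma(\omega)$, so it witnesses $C, \Sigma(\omega) \dTrans^* \Final(\pi)$ (using $\dTrans^*$ for the transitive-reflexive closure as in Section~\ref{section:language}, absorbing the labelled transitions). Writing $\sigma := \Sigma(\omega)$ and $\tau := \Final(\pi)$, we are in exactly the situation of Lemma~\ref{lemma:MV-deterministic}: $V \cap \MV(C) = \emptyset$ and $C, \sigma \dTrans^* \tau$, so that lemma gives $\RState{\tau}{V} = \RState{\sigma}{V}$. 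It then only remains to observe that the left-hand side of the claimed equation, evaluated at $(\omega,\pi)$, is
\[
\RState{\Tau_{C,\Sigma}}{V}(\omega,\pi) ~ = ~ \RState{\Tau_{C,\Sigma}(\omega,\pi)}{V} ~ = ~ \RState{\Final(\pi)}{V} ~ = ~ \RState{\tau}{V} \, ,
\]
while the right-hand side, evaluated at $(\omega,\pi)$, is
\[
\bigl(\RState{\Sigma}{V} \circ q_{C,\Sigma}\bigr)(\omega,\pi) ~ = ~ \RState{\Sigma}{V}(q_{C,\Sigma}(\omega,\pi)) ~ = ~ \RState{\Sigma}{V}(\omega) ~ = ~ \RState{\Sigma(\omega)}{V} ~ = ~ \RState{\sigma}{V} \, ,
\]
using the pointwise definition of restriction on random state. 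By Lemma~\ref{lemma:MV-deterministic} these two states are equal, completing the argument.

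There is no real obstacle here: the content of the lemma is entirely carried by Lemma~\ref{lemma:MV-deterministic} (itself a routine induction on the derivation of $C,\sigma \dTrans^* \tau$, or equivalently on the length of the path $\pi$, case-analysing the transition rules of Figure~\ref{fig:op-sem} and using the definition of $\MV$ in Figure~\ref{figure:MV}), and all that Lemma~\ref{lemma:MV-random} adds is bookkeeping about the sample space $\Omega_{C,\Sigma}$ and the morphism $q_{C,\Sigma}$. The only point requiring a modicum of care is making sure the terminal path $\pi$ really does give a $\dTrans^*$-derivation to $\Final(\pi)$ despite the presence of probabilistic (labelled) transitions along it — but this is immediate since $\dTrans^*$ is defined to include both labelled and unlabelled single-step transitions. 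This is presumably why the authors describe the proof as straightforward and omit it.
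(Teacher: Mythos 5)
Your proof is correct and follows exactly the route the paper intends: the paper omits the proof, noting only that Lemma~\ref{lemma:MV-random} is a simple consequence of Lemma~\ref{lemma:MV-deterministic}, and your pointwise unwinding of $\Tau_{C,\Sigma}$, $q_{C,\Sigma}$ and the restriction operation on random state is precisely that argument (with the terminating hypothesis used only to make $\Omega_{C,\Sigma}$ a legitimate sample space). Nothing is missing.
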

\noindent
Rather than using the syntactic definition of $\MV(C)$  in Figure~\ref{figure:MV}, we remark that one could take the satisfaction of
Lemma~\ref{lemma:MV-deterministic} as a more permissive semantic definition of the set $\MV(C)$ , and the proof of the frame rule below remains valid. 

The main work in proving the soundness of the frame rule lies in proving the theorem below, which
formulates the \emph{relative tightness} property outlined in Section~\ref{section:introduction}. 
This
 asserts a fundamental property of partial (and hence also total) correctness specifications:  the portion of the final state 
$\Tau_{C,\Sigma}$ that is relevant to interpreting the postcondition $\Psi$, 
namely $\RState{\Tau_{C,\Sigma}}{\FV(\Psi)}$, depends only on that part of the start state
$\Sigma$ that is relevant to interpreting the precondition $\Phi$, namely $\RState{\Sigma}{\FV(\Phi)}$. Since the state is randomised, the property of depending only on $\RState{\Sigma}{\FV(\Phi)}$ can be expressed by saying that $\RState{\Tau_{C,\Sigma}}{\FV(\Psi)}$ is  conditionally independent of the input state $\Sigma$ conditional on $\RState{\Sigma}{\FV(\Phi)}$.
In order to formulate this property precisely, the random variables involving $\Sigma$ need to be composed with the
sample space morphism $q_{C,\Sigma} : \Omega_{C,\Sigma} \Epi \Omega$, so that all random variables 
lie over the same sample space $\Omega_{C,\Sigma}$.

\begin{theorem}[Relative tightness]
\label{theorem:independence}
Suppose $\{\Phi\} C \{\Psi\}$ is partially correct, $\Sigma \models \Phi$ and the random configuration 
$C, \Sigma$ is terminating,
then 
\begin{equation}
\label{eqn:relative-tightness}
 \RState{\Tau_{C,\Sigma}}{\,\FV(\Psi)} ~  \Indep ~  \Sigma \circ q_{C,\Sigma} ~ \mid ~
  \RState{\Sigma}{\FV(\Phi)} \circ \, q_{C,\Sigma} ~ . 
\end{equation}
\end{theorem}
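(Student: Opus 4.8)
The plan is to establish the conditional independence in~\eqref{eqn:relative-tightness} by showing that the random variable $\RState{\Tau_{C,\Sigma}}{\FV(\Psi)}$ is, in an appropriate sense, a \emph{deterministic function} of $\RState{\Sigma}{\FV(\Phi)} \circ q_{C,\Sigma}$ together with the probabilistic choices made during execution, in such a way that those choices, conditioned on $\RState{\Sigma}{\FV(\Phi)}$, are independent of the rest of $\Sigma$. Concretely, I would first use the safety guarantee from partial correctness: since $\Sigma \models \Phi$, the configuration $C,\Sigma$ is fault-free, so for every $\omega$ no memory fault can occur when running $C$ from $\Sigma(\omega)$. This is precisely what forces the execution to touch only variables that are ``covered'' by $\FV(\Phi)$ in the relevant sense. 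The key deterministic fact I would isolate as a sublemma is: if $C,\sigma$ and $C,\sigma'$ are both fault-free and $\RState{\sigma}{\FV(\Phi)} = \RState{\sigma'}{\FV(\Phi)}$ — actually, more carefully, if $\sigma$ and $\sigma'$ agree on all variables the execution ever reads before writing — then the two executions are isomorphic as probabilistic transition trees, and in particular, matching up terminal paths by their label sequences, $\RState{\Final(\pi)}{\FV(\Psi)} = \RState{\Final(\pi')}{\FV(\Psi)}$.

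The subtle point, and where I expect the main obstacle to lie, is justifying why $\RState{\sigma}{\FV(\Phi)}$ alone determines ``the variables read before written'' portion of the execution, i.e.\ why safety plus $\Sigma \models \Phi$ propagates the footprint discipline correctly. Fault-freeness of $C,\Sigma$ only tells us directly that, \emph{for the particular} $\Sigma$ at hand, execution never evaluates an expression with an out-of-domain variable. The argument I would give: suppose $\RState{\sigma}{\FV(\Phi)} = \RState{\sigma'}{\FV(\Phi)}$ where $\sigma = \Sigma(\omega)$ for some $\omega$ with $C,\sigma$ fault-free, and $\sigma'$ is arbitrary. Run $C$ from both states in lockstep; since $\Sigma \models \Phi$ requires $\Sigma$ to be $\FV(\Phi)$-total (by (SA2)), $\sigma$ is $\FV(\Phi)$-total, so $\sigma'$ restricted to $\FV(\Phi)$ equals $\sigma$ restricted to $\FV(\Phi)$ and both are $\FV(\Phi)$-total. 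I claim by induction on the length of the (fault-free) execution from $C,\sigma$ that at each step the two configurations have commands that are identical and states that agree on $\FV(\Phi) \cup (\text{variables written so far})$; the inductive step uses that the expression evaluated at each step has its variables inside $\Dom(\sigma_i)$ (by fault-freeness of the $\sigma$-run), and that every such variable is either already in $\FV(\Phi)$ or was written earlier (this last is the real content — it would need Lemma~\ref{lemma:MV-deterministic}-style bookkeeping plus the safety guarantee to rule out reading a variable that is in $\Dom(\sigma)$ but not ``justified'' by $\FV(\Phi)$; here I would lean on the fact that, since $\Sigma$ can be replaced by $\RState{\Sigma}{\FV(\Phi)}$ in $\Sigma \models \Phi$ via~\eqref{eqn:SAA-SAB}, safety already holds for $\RState{\Sigma}{\FV(\Phi)}$, so execution from a state with domain exactly $\FV(\Phi)$ is still fault-free, which is exactly the statement that no variable outside $\FV(\Phi)$ is read before being written).

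Once that sublemma is in hand, I would assemble the conditional independence as follows. Work over the sample space $\Omega_{C,\Sigma}$ with its elements $(\omega,\pi)$. Fix $c$ in the support of $\RState{\Sigma}{\FV(\Phi)}$; by~\eqref{cond-indep-def} I must show $\RState{\Tau_{C,\Sigma}}{\FV(\Psi)}$ and $\Sigma \circ q_{C,\Sigma}$ are independent after conditioning on the event $\{\RState{\Sigma}{\FV(\Phi)}\circ q_{C,\Sigma} = c\}$. On this conditioned sample space, the sublemma says that $\RState{\Tau_{C,\Sigma}}{\FV(\Psi)}(\omega,\pi)$ is a function of $c$ and the label sequence of $\pi$ only — it does not depend on $\omega$ beyond $\omega$'s contribution to $c$, and crucially the set of available terminal paths $\pi$ from $C,\Sigma(\omega)$, together with their probabilities $\Prob(\pi)$, is the same for every $\omega$ with $\RState{\Sigma(\omega)}{\FV(\Phi)}$ determining $c$ (again by the isomorphism-of-execution-trees part of the sublemma). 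Therefore, conditional on $c$, the pair $(\text{label sequence of }\pi)$ is independent of $\omega$, hence of $\Sigma(\omega) = (\Sigma\circ q_{C,\Sigma})(\omega,\pi)$; and since $\RState{\Tau_{C,\Sigma}}{\FV(\Psi)}$ factors through $c$ and the label sequence, \eqref{indep-function} gives the desired independence. I would present the factorisation cleanly by defining an auxiliary function $h$ with $\RState{\Tau_{C,\Sigma}}{\FV(\Psi)} = h(\RState{\Sigma}{\FV(\Phi)}\circ q_{C,\Sigma}, \Lambda)$ where $\Lambda(\omega,\pi)$ is the label sequence of $\pi$, prove $\Lambda \Indep \Sigma\circ q_{C,\Sigma} \mid \RState{\Sigma}{\FV(\Phi)}\circ q_{C,\Sigma}$ directly from the tree-isomorphism, and then conclude. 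The genuinely delicate verification throughout is the ``read-before-written'' invariant in the inductive sublemma; everything downstream is a routine manipulation of conditioning and pushforwards.
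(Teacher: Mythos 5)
Your use of the safety guarantee is the right one, and it mirrors the paper's mechanism: since $\RState{\Sigma}{\FV(\Phi)} \models \Phi$ by~\eqref{eqn:SAA-SAB}, partial correctness makes $C,\RState{\Sigma}{\FV(\Phi)}$ fault-free, and this is exactly what lets you run $C$ from $\sigma$ and from $\RState{\sigma}{\FV(\Phi)}$ (or from any $\sigma'$ agreeing with $\sigma$ on $\FV(\Phi)$) in lockstep, matching transitions by labels. The paper packages this as the Masking Lemma (Lemma~\ref{lemma:masking}) plus a probability-mass argument identifying $\TP_{C,\sigma}$ with the masked image of $\TP_{C,\RState{\sigma}{\FV(\Phi)}}$, whereas you prove a two-way tree isomorphism directly; your route even yields termination of $C,\RState{\sigma}{\FV(\Phi)}$ a little more directly than the paper's comparison of divergence probabilities. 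The final assembly via the label-sequence variable $\Lambda$, \eqref{cond-indep-def}, \eqref{independence-via-conditioning} and \eqref{indep-function} is also fine.

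The genuine gap is the claimed sublemma that matched terminal paths satisfy $\RState{\Final(\pi)}{\FV(\Psi)} = \RState{\Final(\pi')}{\FV(\Psi)}$, equivalently your claim that $\RState{\Tau_{C,\Sigma}}{\FV(\Psi)}(\omega,\pi)$ factors through $c$ and the label sequence alone. Nowhere do you use the postcondition $\Psi$ beyond its footprint, and without it this step is false: take $C=\Skip$, a variable $Z\in\FV(\Psi)\setminus\FV(\Phi)$ never read or written, and $\sigma,\sigma'$ agreeing on $\FV(\Phi)$ with $\sigma(Z)=5$, $\sigma'(Z)=7$; the two executions are fault-free and isomorphic, yet the final restrictions to $\FV(\Psi)$ differ at $Z$. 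In general the final state of the run from $\sigma$ is the final state of the run from $\RState{\sigma}{\FV(\Phi)}$ masked by $\sigma$, so its $\FV(\Psi)$-part depends on $\sigma$ unless that unmasked final state is already defined on all of $\FV(\Psi)$. What supplies this under the theorem's hypotheses is the postcondition half of partial correctness applied to the random state $\RState{\Sigma}{\FV(\Phi)}$: having shown $C,\RState{\Sigma}{\FV(\Phi)}$ is terminating (your isomorphism gives this), partial correctness yields $\Tau_{C,\RState{\Sigma}{\FV(\Phi)}}\models\Psi$, hence by (SA2) this final random state is $\FV(\Psi)$-total, so every variable of $\FV(\Psi)$ is either in $\FV(\Phi)$ or written before termination along every terminal path. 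Only with this extra step does $\RState{\Final(\pi')}{\FV(\Psi)}$ become a function of $(c,\Lambda)$, which is what your conditional-independence argument needs. So the delicate point is not the read-before-written bookkeeping you flag (you handle that correctly via safety of the restricted state), but the missing invocation of $\Psi$-satisfaction; add it and your proof goes through, essentially recovering the paper's argument in bisimulation form.
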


\noindent
Before giving the somewhat involved proof of Theorem~\ref{theorem:independence}, we 
show how the soundness of the frame rule follows as a consequence of relative tightness.
\begin{corollary}[Soundness of the frame rule]
\label{corollary:frame}
The frame rule is sound for both partial correctness and total correctness.
\end{corollary}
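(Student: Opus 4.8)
The plan is to derive Corollary~\ref{corollary:frame} from Theorem~\ref{theorem:independence} by a purely probabilistic argument, using the characterisation of $*$ and the standard facts~\eqref{indep-function} and~\eqref{cond-indep-rule}. I would first treat partial correctness. So suppose $\{\Phi\}\,C\,\{\Psi\}$ is partially correct with $\FV(\Theta) \cap \MV(C) = \emptyset$, and take any random state $\Sigma : \Omega \to \State$ with $\Sigma \models \Phi * \Theta$. Unfolding the definition of $*$, this gives $\Sigma \models \Phi$, $\Sigma \models \Theta$, and $\RState{\Sigma}{\FV(\Phi)} \Indep \RState{\Sigma}{\FV(\Theta)}$. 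From $\Sigma \models \Phi$ and partial correctness, $C,\Sigma$ is fault-free, which is exactly the safety guarantee needed for the conclusion $\{\Phi * \Theta\}\,C\,\{\Psi * \Theta\}$. If $C,\Sigma$ is not terminating we are done; otherwise we must show $\Tau_{C,\Sigma} \models \Psi * \Theta$, i.e.\ $\Tau_{C,\Sigma} \models \Psi$, $\Tau_{C,\Sigma} \models \Theta$, and $\RState{\Tau_{C,\Sigma}}{\FV(\Psi)} \Indep \RState{\Tau_{C,\Sigma}}{\FV(\Theta)}$.

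The first two conjuncts are easy. $\Tau_{C,\Sigma} \models \Psi$ is immediate from partial correctness of $\{\Phi\}\,C\,\{\Psi\}$. For $\Tau_{C,\Sigma} \models \Theta$: since $\FV(\Theta) \cap \MV(C) = \emptyset$, Lemma~\ref{lemma:MV-random} gives $\RState{\Tau_{C,\Sigma}}{\FV(\Theta)} = \RState{\Sigma}{\FV(\Theta)} \circ q_{C,\Sigma}$; now $\Sigma \models \Theta$ means $\RState{\Sigma}{\FV(\Theta)} \models \Theta$ by~\eqref{eqn:SAA-SAB}, and then (SA3) applied to the sample space morphism $q_{C,\Sigma}$ gives $\RState{\Sigma}{\FV(\Theta)} \circ q_{C,\Sigma} \models \Theta$, hence $\Tau_{C,\Sigma} \models \Theta$ again by~\eqref{eqn:SAA-SAB}.

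The real content is the independence conjunct. Write everything over the common sample space $\Omega_{C,\Sigma}$: put $A := \RState{\Tau_{C,\Sigma}}{\FV(\Psi)}$, $B := \RState{\Sigma}{\FV(\Theta)} \circ q_{C,\Sigma}$ (which by Lemma~\ref{lemma:MV-random} equals $\RState{\Tau_{C,\Sigma}}{\FV(\Theta)}$, the random variable we want $A$ independent of), and $U := \RState{\Sigma}{\FV(\Phi)} \circ q_{C,\Sigma}$. Theorem~\ref{theorem:independence} gives $A \Indep (\Sigma \circ q_{C,\Sigma}) \mid U$, and since $B$ is a function of $\Sigma \circ q_{C,\Sigma}$, a conditional-independence analogue of~\eqref{indep-function} (applied fibrewise, using the definition~\eqref{cond-indep-def}) yields $A \Indep B \mid U$. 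Separately, from $\RState{\Sigma}{\FV(\Phi)} \Indep \RState{\Sigma}{\FV(\Theta)}$, composing both sides with $q_{C,\Sigma}$ (which preserves independence, being a pushforward-preserving map) gives $U \Indep B$. Now apply~\eqref{cond-indep-rule} with the roles $S := B$, $T := A$, $U := U$: from $B \Indep A \mid U$ and $B \Indep U$ we get $B \Indep (A, U)$, and then~\eqref{indep-function} (project $(A,U)$ onto $A$) gives $B \Indep A$, i.e.\ $A \Indep B$ by symmetry. This is exactly $\RState{\Tau_{C,\Sigma}}{\FV(\Psi)} \Indep \RState{\Tau_{C,\Sigma}}{\FV(\Theta)}$, completing the partial-correctness case.

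For total correctness, the only addition is that from $\Sigma \models \Phi$ and total correctness of $\{\Phi\}\,C\,\{\Psi\}$ the configuration $C,\Sigma$ is automatically terminating, so the ``if $C,\Sigma$ is terminating'' hypothesis used above is discharged for free, and the same argument then delivers $\Tau_{C,\Sigma} \models \Psi * \Theta$. The main obstacle I anticipate is purely bookkeeping: ensuring all the random variables genuinely live over $\Omega_{C,\Sigma}$, that the intermediate facts $U \Indep B$ and $A \Indep B \mid U$ are stated with consistent sample spaces, and that composing independence (and conditional independence) statements along the morphism $q_{C,\Sigma}$ is justified — the latter being routine from the definitions but needing care, since~\eqref{indep-function} is stated only for post-composition with a function, not for changing sample space, so one verifies the change-of-sample-space case directly from the defining equation $p_{(S,T)}(a,b) = p_S(a)\cdot p_T(b)$ together with $q_!(p_{\Omega'}) = p_\Omega$. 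There is no conceptual difficulty beyond that; the logical skeleton is just ``$A \Indep B \mid U$ plus $U \Indep B$ implies $A \Indep B$'' via~\eqref{cond-indep-rule}.
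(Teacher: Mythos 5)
Your proof is correct and follows essentially the same route as the paper: unfold $*$, obtain safety and $\Tau_{C,\Sigma}\models\Psi$ from partial correctness of the premise, obtain $\Tau_{C,\Sigma}\models\Theta$ from Lemma~\ref{lemma:MV-random} together with~\eqref{eqn:SAA-SAB}/(SA3), and derive the independence conjunct from Theorem~\ref{theorem:independence} via~\eqref{cond-indep-def}, \eqref{indep-function} and~\eqref{cond-indep-rule}. Your application of~\eqref{cond-indep-rule} with the roles $S:=\RState{\Tau_{C,\Sigma}}{\FV(\Theta)}$, $T:=\RState{\Tau_{C,\Sigma}}{\FV(\Psi)}$ (concluding $\RState{\Tau_{C,\Sigma}}{\FV(\Theta)}\Indep\bigl(\RState{\Tau_{C,\Sigma}}{\FV(\Psi)},\,\RState{\Sigma}{\FV(\Phi)}\circ q_{C,\Sigma}\bigr)$ before projecting) is in fact the more careful reading of that step, and your explicit verification that pre-composition with the sample-space morphism $q_{C,\Sigma}$ preserves independence supplies a detail the paper leaves implicit.
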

\begin{proof} We first consider partial correctness. Suppose $\{\Phi\} C \{\Psi\}$ is partially correct. Suppose
also that $\Theta$ is such that $\FV(\Theta) \cap \MV(C) = \emptyset$. 
We need to show that the specification $\{\Phi * \Theta \} C \{\Psi * \Theta\}$ is partially correct.

Suppose then that $\Sigma \models \Phi * \Theta$. That is, we have
\begin{gather}
\label{frame-sound-phi}
\Sigma \models \Phi
\\
\label{frame-sound-theta}
\Sigma \models \Theta
\\
\label{frame-sound-PIT}
\Sigma_{\FV(\Phi)} \Indep \Sigma_{\FV(\Theta)}
\end{gather}
By~\eqref{frame-sound-phi} and the partial correctness of $\{\Phi\} C \{\Psi\}$,
it is immediate that $C, \Sigma$ is fault-free. Assume that $C, \Sigma$ is terminating. We need to show that 
$\Tau_{C,\Sigma} \models \Psi * \Theta$.

 By~\eqref{frame-sound-phi} and the partial correctness of $\{\Phi\} C \{\Psi\}$, we 
have $\Tau_{C,\Sigma} \models \Psi$. 

Since $\FV(\Theta) \cap \MV(C) = \emptyset$,
we have, by Lemma~\ref{lemma:MV-random}, that 
\begin{equation}
\label{frame-sound-equality}
\RState{\Tau_{C,\Sigma}}{\FV(\Theta)} = \RState{\Sigma}{\FV(\Theta)} \! \circ \, q_{C,\Sigma}\, .
\end{equation}
From~\eqref{frame-sound-theta}, it follows that  
$\RState{\Sigma}{\FV(\Theta)} \models \Theta$, by~\eqref{eqn:SAA-SAB}, hence 
$\RState{\Sigma}{\FV(\Theta)} \circ q_{C,\Sigma} \models \Theta$ by (SA3), i.e.,
$\RState{\Tau_{C,\Sigma}}{\FV(\Theta)} \models \Theta$. It then follows from (SA1)
that $\Tau_{C,\Sigma} \models \Theta$.

Finally, we show that $\RState{\Tau_{C,\Sigma}}{\FV(\Psi)} \Indep \RState{\Tau_{C,\Sigma}}{\FV(\Theta)}$.
By~\eqref{frame-sound-PIT}, we have 
 $\RState{\Sigma}{\FV(\Phi)} \circ q_{C,\Sigma} \Indep \RState{\Sigma}{\FV(\Theta)} \circ q_{C,\Sigma}$; equivalently, by~\eqref{frame-sound-equality} (and the symmetry of independence)
\begin{equation}
\label{frame-sound-indep-a}
\RState{\Tau_{C,\Sigma}}{\FV(\Theta)} 
\Indep 
\RState{\Sigma}{\FV(\Phi)} \circ \, q_{C,\Sigma} \, .
\end{equation}
  By Theorem~\ref{theorem:independence}, we have 
$\RState{\Tau_{C,\Sigma}}{\FV(\Psi)}   \Indep  \Sigma \circ q_{C,\Sigma}  \mid 
  \RState{\Sigma}{\FV(\Phi)}  \circ \, q_{C,\Sigma}$.
  It thus follows from~\eqref{cond-indep-def} and~\eqref{indep-function}, using the function $\sigma \mapsto \RState{\sigma}{\FV(\Theta)} : \State \to \State$, that
  $\RState{\Tau_{C,\Sigma}}{\FV(\Psi)}   \Indep  \RState{\Sigma}{\FV(\Theta)} \circ q_{C,\Sigma}  \mid 
  \RState{\Sigma}{\FV(\Phi)}  \circ \, q_{C,\Sigma}\,$; i.e., by~\eqref{frame-sound-equality},
  \begin{equation}
  \label{frame-sound-indep-b}
  \RState{\Tau_{C,\Sigma}}{\FV(\Psi)}   \Indep  \RState{\Tau_{C,\Sigma}}{\FV(\Theta)} 
   \mid 
  \RState{\Sigma}{\FV(\Phi)}  \circ q_{C,\Sigma} \, .
  \end{equation}
  Applying~\eqref{cond-indep-rule} to~\eqref{frame-sound-indep-a} and~\eqref{frame-sound-indep-b}, we get
  \[\RState{\Tau_{C,\Sigma}}{\FV(\Psi)}  \, \Indep \, (\RState{\Tau_{C,\Sigma}}{\FV(\Theta)} , \,
  \RState{\Sigma}{\FV(\Phi)} \! \circ q_{C,\Sigma})\, .\]
  Finally, by another application of~\eqref{indep-function}, using first projection $(\tau,\sigma) \mapsto \tau$ as the function, we obtain
  $\RState{\Tau_{C,\Sigma}}{\FV(\Psi)}   \Indep  \RState{\Tau_{C,\Sigma}}{\FV(\Theta)}$, as required.
  
  The soundness of the frame rule for total correctness follows directly from soundness for partial correctness, because the  termination of $C, \Sigma$, which is the only additional property needed to
to show the total correctness of
 $ \{\Phi * \Theta \} \, C\, \{\Psi * \Theta\}$, is immediate from the assumption that
 $ \{\Phi\} \, C\, \{\Psi \}$ is totally correct.
\end{proof}

We now turn to the proof of Theorem~\ref{theorem:independence}. In preparation, we need a simple lemma about the operational semantics. Given $\sigma,\sigma' \in \State$, we define $\Mask{\sigma}{\sigma'} \in \State$ (the
\emph{masking of $\sigma$ by
$\sigma'$}) by:
\[
\Mask{\sigma}{\sigma'}(X) ~ := ~ \begin{cases}
\sigma'(X) & \text{if $X \in \Dom(\sigma')$} \\
\sigma(X) & \text{otherwise}\, .
\end{cases}
\]
It is immediate from the definition that 
$\sigma' \sqleq \Mask{\sigma}{\sigma'}$. Also we have:
\begin{equation}
\label{equation:masking}
\sigma' \sqleq \sigma ~ \implies ~ \Mask{\sigma}{\sigma'} = \sigma \, .
\end{equation}
\begin{lemma}[Masking lemma]
\label{lemma:masking}
For any state $\sigma$, configuration $C_0,\sigma_0$ and terminal path $\pi \in \TP_{C_0,\sigma_0}$ of the form
\eqref{eqn:terminal-path}, we have $\Mask{\sigma}{\pi} \in \TP_{C_0,\Mask{\sigma}{\sigma_0}}$, where the masked path $\Mask{\sigma}{\pi}$ is defined by
\[
\Mask{\sigma}{\pi} ~ := ~ C_0,\Mask{\sigma}{\sigma_0} \lTrans{L_0} C_1,\Mask{\sigma}{\sigma_1}  \lTrans{L_1}  \cdots\ C_{n-1},
\Mask{\sigma}{\sigma_{n-1}} \lTrans{L_{n-1}} \Mask{\sigma}{\tau}\, ,
\]
Similarly, for every infinite path $\zeta \in \TP_{C_0,\sigma_0}$ of the form
\eqref{eqn:infinite-path}, we have $\Mask{\sigma}{\zeta} \in \IP_{C_0,\Mask{\sigma}{\sigma_0}}$,
where the masked path $\Mask{\sigma}{\zeta}$ is defined by
\[
\Mask{\sigma}{\zeta} ~ := ~
C_0,\Mask{\sigma}{\sigma_0} \lTrans{L_0} C_1,\Mask{\sigma}{\sigma_1}  \lTrans{L_1}  C_{2},
\Mask{\sigma}{\sigma_{2}} \lTrans{L_{2}} \cdots \, ,
\]
\end{lemma}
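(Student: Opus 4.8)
The plan is to prove the masking lemma by induction on the length of the derivation of the single-step transition, lifted to paths by a straightforward induction on path length. The key observation driving everything is a \emph{determinism-up-to-masking} principle for expression evaluation: if $\Var(E) \subseteq \Dom(\sigma_0)$, then $\Sem{E}_{\langle\sigma\rangle\sigma_0} = \Sem{E}_{\sigma_0}$, because $\langle\sigma\rangle\sigma_0$ agrees with $\sigma_0$ on $\Dom(\sigma_0)$, hence on $\Var(E)$; and conversely $\Dom(\sigma_0) \subseteq \Dom(\langle\sigma\rangle\sigma_0)$, so if a fault is \emph{not} triggered in $\sigma_0$ it is not triggered in $\langle\sigma\rangle\sigma_0$ either (the converse direction — a fault in $\sigma_0$ need not persist — is exactly why we only claim that terminal/infinite paths lift, not that the masked configuration faithfully mirrors \emph{all} behaviour of $C_0,\sigma_0$). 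The same remark applies verbatim to boolean guards $B$ and distribution expressions $D$, and the label $L_i$ (empty, or $\pAssgn{n_i}{d_i}$ with $d_i = \Sem{D}_{\sigma_i}$) is therefore preserved unchanged, which is what makes the statement $\langle\sigma\rangle\pi \in \TP_{C_0,\langle\sigma\rangle\sigma_0}$ even typecheck.

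First I would establish the single-step claim: if $C,\sigma_i \lTrans{L} \gamma$ where $\gamma$ is either $C',\sigma_{i+1}$ or $\sigma_{i+1}$ (a non-fault node), then $C,\langle\sigma\rangle\sigma_i \lTrans{L} \langle\sigma\rangle\gamma$ (interpreting $\langle\sigma\rangle$ on terminal and non-terminal nodes in the obvious componentwise way). This goes by induction on the operational-semantics derivation in Figure~\ref{fig:op-sem}. The base cases for assignment, skip, conditionals and while all reduce to the expression-evaluation remark above, together with the simple commutation fact that $\langle\sigma\rangle(\sigma_i[X \mapsto n]) = (\langle\sigma\rangle\sigma_i)[X \mapsto n]$ whenever $X \in \Dom(\sigma_i)$ or, more carefully, whenever $X \notin \Dom(\sigma)$ — one must check this identity holds on the relevant variables; in fact for the assignment rules the resulting state is used as $\sigma_{i+1}$ in the path, and $\langle\sigma\rangle\sigma_{i+1}$ is precisely what appears in $\langle\sigma\rangle\pi$, so the identity needed is literally $\langle\sigma\rangle(\sigma_i[X\mapsto n]) = (\langle\sigma\rangle\sigma_i)[X \mapsto n]$, which one verifies by cases on whether the queried variable is $X$, in $\Dom(\sigma)$, or neither. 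The inductive cases are the three sequencing rules, where the claim follows immediately from the induction hypothesis applied to the premise about $C_1$.

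Next I would lift this to paths. Given a terminal path $\pi$ of the form~\eqref{eqn:terminal-path}, apply the single-step claim to each of its $n$ transitions in turn; since none of the nodes $C_i,\sigma_i$ nor the final $\tau$ is $\Fail$, each application is legitimate, and concatenating the results gives exactly the displayed path $\langle\sigma\rangle\pi$, which is therefore a terminal path in $\TP_{C_0,\langle\sigma\rangle\sigma_0}$. The infinite case is identical, applying the single-step claim to each of the (infinitely many) transitions of $\zeta$ and observing that the result is an infinite path of non-terminal configurations starting from $C_0,\langle\sigma\rangle\sigma_0$, i.e.\ an element of $\IP_{C_0,\langle\sigma\rangle\sigma_0}$. (I note in passing that the statement as written says $\zeta \in \TP_{C_0,\sigma_0}$ where it evidently means $\zeta \in \IP_{C_0,\sigma_0}$; I would silently read it that way.)

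The main obstacle is essentially bookkeeping rather than conceptual: getting the masking/update commutation $\langle\sigma\rangle(\tau[X\mapsto n]) = (\langle\sigma\rangle\tau)[X\mapsto n]$ exactly right and making sure the expression-evaluation invariance is invoked with the correct inclusion direction at each fault-free step. There is no genuine difficulty because the lemma deliberately asserts only a \emph{one-way} simulation — fault-free behaviour of $C_0,\sigma_0$ is reflected, under masking, as behaviour of $C_0,\langle\sigma\rangle\sigma_0$ — and this direction is precisely the one that survives the fact that $\langle\sigma\rangle\sigma_0$ has a (possibly strictly) larger domain than $\sigma_0$. Accordingly I would present the single-step lemma and its routine case analysis compactly, then dispatch both path statements in a line each.
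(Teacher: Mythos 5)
Your proposal is correct and is essentially the paper's own (omitted) argument: the paper justifies the lemma by exactly your single-step observation — since $\sigma' \sqleq \Mask{\sigma}{\sigma'}$, every non-fault transition $C',\sigma' \lTrans{L} C'',\sigma''$ or $C',\sigma' \lTrans{L} \tau'$ is preserved by masking, with the same label — and then lifts this step-by-step to terminal and infinite paths. Two minor remarks: the commutation $\Mask{\sigma}{(\sigma_i[X\mapsto n])} = (\Mask{\sigma}{\sigma_i})[X\mapsto n]$ in fact holds unconditionally (your case check shows this, so the hedging about $X \in \Dom(\sigma_i)$ or $X \notin \Dom(\sigma)$ is unnecessary), and you are right that $\zeta \in \TP_{C_0,\sigma_0}$ in the statement is a typo for $\zeta \in \IP_{C_0,\sigma_0}$.
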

\noindent
Although we omit the straightforward proof, we remark that
the lemma holds because the inequality $\sigma' \sqleq \Mask{\sigma}{\sigma'}$ means that
transitions of the form $C', \sigma'  \lTrans{L} C'', \sigma''$
and $C', \sigma'  \lTrans{L} \tau'$ are preserved by masking; that is, they give rise to transitions $C', \Mask{\sigma}{\sigma'}  \lTrans{L} C'', \Mask{\sigma}{\sigma''}$
and $C', \Mask{\sigma}{\sigma'}  \lTrans{L} \Mask{\sigma}{\tau'}$ respectively.
The same does not apply to error transitions $C', \sigma'  \dTrans \Fail$, because
the masked state $\sigma$ may be defined on the variable whose undefinedness in
$\sigma'$ triggers the fault.

\begin{proof*}{Proof of Theorem~\ref{theorem:independence}.}
Suppose $\{\Phi\} C \{\Psi\}$ is partially correct, $\Sigma \models \Phi$ and  
$C, \Sigma$ is terminating. 
We have to show that
\[
\RState{\Tau_{C,\Sigma}}{\FV(\Psi)} 
~ \Indep~
\Sigma \circ q_{C,\Sigma} 
    ~ \mid ~
  \RState{\Sigma}{\FV(\Phi)}  \circ \, q_{C,\Sigma}\, .
 \]
 For notational simplicity, define
 $T := \RState{\Tau_{C,\Sigma}}{\FV(\Psi)}$,
 $~S := \Sigma \circ q_{C,\Sigma}$ and
 $R := \RState{\Sigma}{\FV(\Phi)}  \circ q_{C,\Sigma}$.
 Combining~\eqref{cond-indep-def} and~\eqref{independence-via-conditioning},
 we need to show that, for every $\rho \in \Supp(R)$ and every
 $\sigma, \sigma' \in \Supp({S_{|R=\rho}})$, it holds that
 $p_{T|(S|R=\rho)=\sigma} = p_{T|(S|R=\rho)=\sigma'}\,$.
 By a straightforward property of iterated conditioning, $p_{T|(S|R=\rho)=\sigma} = p_{T|(S,R)=(\sigma,\rho)}$.
 Also, by definition of $R$ and $S$, for all $\sigma,\rho \in \State$,
 \[\rho \in \Supp(R) \wedge \sigma \in \Supp({S_{|R= \rho}}) ~\iff ~
 \sigma \in \Supp({S}) \wedge \rho = \RState{\sigma}{\FV(\Phi)} \, .
 \]
 So $\sigma$ determines $\rho$, and hence $p_{T|(S,R)=(\sigma,\rho)} = p_{T|(S,R)=(\sigma,\, \RState{\sigma}{\!\FV(\Phi)})} =
 p_{T|S=\sigma}$.
 Since $\Supp(S) = \Supp(\Sigma)$, it is therefore enough to show:
  \begin{equation}
  \label{main-to-show}
 \forall\sigma,\sigma' \in \Supp({\Sigma}), ~~\RState{\sigma}{\FV(\Phi)} =  \RState{\sigma'}{\FV(\Phi)}
 ~ \implies ~ p_{T|S=\sigma} = p_{T|S=\sigma'} \, .
 \end{equation}
 We approach this in several steps.
 
 
 Consider any $\sigma \in \Supp(\Sigma)$.
 Our first goal is to show that $C, \RState{\sigma\!}{\FV(\Phi)}$ is terminating.
  By Lemma~\ref{lemma:masking}, every infinite execution path $\zeta \in \IP_{C,\,\RState{\sigma}{\FV(\Phi)}}$, gives us a corresponding infinite path $\Mask{\sigma}{\zeta} \in \IP_{C,\Mask{\sigma}{(\RState{\sigma}{\FV(\Phi)})}}$. 
Then  $\Mask{\sigma}{\zeta} \in \IP_{C,\sigma}$, because $\Mask{\sigma}{(\RState{\sigma}{\FV(\Phi)})} = \sigma$, by~\eqref{equation:masking}. That is, every path in $\IP_{C,\,\RState{\sigma\,}{\FV(\Phi)}}$ has a matching path in 
$\IP_{C,\sigma}$. So, by the formula for divergence probabilities~\eqref{eqn:divergence-prob}, we have
\[
\Prob[\text{$C,{\RState{\sigma\!}{\FV(\Phi)}}$ diverges}]  ~ \leq ~ 
\Prob[\text{$C,\sigma$ diverges}] \, .
\]
Since $C,\Sigma$ is terminating and  $\sigma \in \Supp(\Sigma)$, it holds that
$C,\sigma$ is terminating, hence
$\Prob[\text{$C,\sigma$ diverges}] = 0$. Therefore $\Prob[\text{$C,{\RState{\sigma\!}{\FV(\Phi)}}$ diverges}] = 0$.
Because $\Sigma \models \Phi$, it holds that $\RState{\Sigma}{\FV(\Phi)} \models \Phi$.
 So, by the safety guarantee from the partial correctness of $\{\Phi\} C \{\Psi\}$, the random configuration
 $C, \RState{\Sigma}{\FV(\Phi)}$ is fault-free. Since 
 $\RState{\sigma\!}{\FV(\Phi)} \in \Supp({\RState{\Sigma}{\FV(\Phi)}})$, 
 it holds that $C, \RState{\sigma}{\FV(\Phi)}$ is fault-free.
 Hence, 
 by~\eqref{terminate-inequality} and the discussion following it,
 $\Prob[\text{$C,{\RState{\sigma\!}{\FV(\Phi)}}$ terminates}] = 1$. 
 That is, $C,{\RState{\sigma\!}{\FV(\Phi)}}$ is terminating. 
 
 Since both $C,\sigma$  and $C, \RState{\sigma\!}{\FV(\Phi)}$ are terminating, they induce
probability distributions $p_{C,\sigma}$ and $p_{C, \,\RState{\sigma}{\FV(\Phi)}}$ on their terminal states.
Our next goal is to relate these two distributions.
By Lemma~\ref{lemma:masking}, every terminal path $\pi \in \TP_{C, \,\RState{\sigma}{\FV(\Phi)}}$,
induces a corresponding terminal path $\Mask{\sigma}{\pi} \in \TP_{C,\sigma}$
with $\Prob(\Mask{\sigma}{\pi}) = \Prob(\pi)$. Since $C, \RState{\sigma\!}{\FV(\Phi)}$ is terminating,
the probabilities $\Prob(\pi)$ of all $\pi \in \TP_{C, \,\RState{\sigma}{\FV(\Phi)}}$ add up to $1$,
hence so do all probabilities $\Prob(\Mask{\sigma}{\pi})$. Thus  every terminal path
in $\TP_{C,\sigma}$ is necessarily of the form $\Mask{\sigma}{\pi}$ for some 
$\pi \in \TP_{C, \,\RState{\sigma}{\FV(\Phi)}}$.
 It follows that the induced probabilities on terminal states, 
$p_{C,\sigma}$ and $p_{C, \,\RState{\sigma}{\FV(\Phi)}}$, are related to each other
by the pushforward property
\begin{equation}
\label{pushforward-property}
p_{C,\sigma} ~ = ~  (\tau \mapsto \Mask{\sigma}{\tau})_! \, (p_{C, \RState{\sigma}{\FV(\Phi)}}) \, .
\end{equation}

 We next apply the partial correctness
 of $\{\Phi\} C \{\Psi\}$ to the random state $\RState{\Sigma}{\FV(\Phi)}$.
 We have already noted that $\RState{\Sigma}{\FV(\Phi)} \models \Phi$.
 We have also shown that $C, \RState{\sigma\!}{\FV(\Phi)}$ is  terminating
 for every $\sigma \in \Supp(\Sigma)$. 
 Since every $\sigma' \in \Supp({\RState{\Sigma}{\FV(\Phi)}})$ is of the form 
 $\RState{\sigma\!}{\FV(\Phi)}$ for $\sigma \in \Supp(\Sigma)$, it follows that
 the random  configuration $C, \RState{\Sigma\!}{\FV(\Phi)}$ is terminating.
 So 
$\Tau_{C,\, \RState{\Sigma\,}{\FV(\Phi)}} \models \Psi$ holds, by the partial correctness
 of $\{\Phi\} C \{\Psi\}$.
It follows that the random state $\Tau_{C,\, \RState{\Sigma\,}{\FV(\Phi)}}$ is $\FV(\Psi)$-total.

We now finally turn to~\eqref{main-to-show}. What we actually prove is
\begin{equation}
\label{actual-shown}
\forall\sigma\in \Supp({\Sigma}), ~~
p_{T|S=\sigma} ~ =  ~
(\tau' \mapsto \RState{\tau'}{\FV(\Psi)})_!  \, (p_{C, \,\RState{\sigma\,}{\FV(\Phi)}}) \, ,
\end{equation}
from which~\eqref{main-to-show} follows, since the right-hand side of~\eqref{actual-shown} depends only on $\RState{\sigma}{\FV(\Phi)}$.
 To establish~\eqref{actual-shown}, consider any $\sigma \in   \Supp(\Sigma)$.
 We have $T = \RState{\Tau_{C,\Sigma}}{\FV(\Psi)} = (\tau' \mapsto \RState{\tau'}{\FV(\Psi)}) \circ \Tau_{C,\Sigma}$. Also, by the definition of $p_{C,\sigma}$, it holds that
 $p_{\,\Tau_{C,\Sigma}| S = \sigma} = p_{C,\sigma}$.
 We can thus establish~\eqref{actual-shown} by
 \begin{align*}
 p_{T|S=\sigma} ~ & = ~ p_{\,((\tau' \mapsto \RState{\tau'}{\FV(\Psi)}) \circ \Tau_{C,\Sigma}) | S=\sigma}
 \\
& = ~ (\tau' \mapsto \RState{\tau'}{\FV(\Psi)})_!\, (p_{\,\Tau_{C,\Sigma} | S=\sigma})
 \\
 & = ~  (\tau' \mapsto \RState{\tau'}{\FV(\Psi)})_!\, (p_{C,\sigma}) 
 \\
 & = ~ (\tau' \mapsto \RState{\tau'}{\FV(\Psi)})_!\,
  (\tau \mapsto \Mask{\sigma}{\tau})_! \, (p_{C, \RState{\sigma}{\FV(\Phi)}}) 
  & & \text{by~\eqref{pushforward-property}}
\\
& = ~ (\tau \mapsto \RState{\tau}{\FV(\Psi)})_!\,
 (p_{C, \RState{\sigma}{\FV(\Phi)}}) \,,
 \end{align*}
where the last step uses the fact, established earlier, that 
$\Tau_{C,\, \RState{\Sigma\,}{\FV(\Phi)}}$ is $\FV(\Psi)$-total.
Indeed, this means that every $\tau \in \Supp(p_{\,C, \RState{\sigma}{\FV(\Phi)}})$ is 
$\FV(\Psi)$-total. Hence,
$\RState{\tau}{\FV(\Psi)} =  \RState{(\Mask{\sigma}{\tau})}{\FV(\Psi)}$ holds for such $\tau$, since the masking operation does not change the value of variables defined in $\tau$. 
\end{proof*}

We end this section with a very simple example illustrating how the safety aspect of partial correctness is essential to the validity of both relative tightness and the frame rule. Consider the specification
\begin{equation}
\label{violation}
\{\top \} ~ X :=X\ \mathsf{mod}\ 2 ~ \{[X = 0 \vee X=1]\} \, ,
\end{equation}
where $\top$ is the  true assertion. 
Recall that $[X = 0 \vee X=1]$ says that the stated property holds with probability $1$.
Under our definition of partial correctness (Definition~\ref{definition:partial-correctness}), this specification is not partially correct because it fails the safety guarantee: the empty random state satisfies the precondition, 
but execution from the empty state faults.  If, however, the safety guarantee is 
dropped from the definition of partial correctness, then the above specification becomes correct.
In any random state $\Sigma$ from which the program $X {:=} X\ \mathsf{mod}\ 2$ terminates,
the postcondition $\{[X = 0 \vee X=1]\}$ is indeed true.

We observe that considering~\eqref{violation} as correct violates relative tightness.
Let $\Sigma$ be a  random state 
with distribution (expressed as a convex sum of outputs weighted by probabilities) 
\[
p_\Sigma ~ = ~ \frac{1}{2}\cdot [X\mapsto 0] + \frac{1}{2}\cdot [X\mapsto 1] \, .
\]
Obviously $\Sigma$ satisfies the precondition $\top$. Moreover $X {:=} X\ \mathsf{mod}\ 2$ terminates,
when run from $\Sigma$, in 
the random state
\[
\Tau ~ = ~ \Sigma \circ q_{\,{X {:=} X\,\mathsf{mod}\,2},\, \Sigma} \, ,
\]
whose distribution is 
\[
p_\Tau ~ = ~ \frac{1}{2}\cdot [X\mapsto 0] + \frac{1}{2}\cdot [X\mapsto 1] \, .
\]
Since  $\FV(\top) = \emptyset$, and $\FV([X = 0 \vee X=1]) = \{X\}$, 
the property of relative tightness~\eqref{eqn:relative-tightness}
asserts that $\Tau \Indep  \Sigma \circ q_{\,{X {:=} X\,\mathsf{mod}\,2},\, \Sigma}$, 
i.e., $\Tau \Indep \Tau$, which is patently not the case. 

For a similar reason, the frame rule also fails if~\eqref{violation} 
is considered as correct. Consider the specification
\begin{equation}
\label{frame-violation}
\{\top * [Y = 0 \vee Y=1]\} ~ X :=X\ \mathsf{mod}\ 2 ~ \{[X = 0 \vee X=1] * [Y = 0 \vee Y=1]\} \, ,
\end{equation}
Let $\Sigma$ be a random state with distribution
\[
p_\Sigma ~ := ~ \frac{1}{2}\cdot [X\mapsto 0, Y \mapsto 0] + \frac{1}{2}\cdot [X\mapsto 1, Y \mapsto 1] \,
\]
Then $\Sigma$ satisfies the precondition $\top * [Y = 0 \vee Y = 1]$. 
The command $X {:=} X\ \mathsf{mod}\ 2$ again terminates,
when run from $\Sigma$, in a random state $\Tau$ with $p_\Tau = p_\Sigma$.
Because the vaues of $X$ and $Y$ are correlated in the distribution $p_\Tau$, it is not the case that $\Tau \models [X= 0 \vee X = 1] * [Y = 0 \vee Y = 1]$. So~\eqref{frame-violation} does not hold.

Let us use the same example to compare our frame rule (Figure~\ref{figure:frame-rule}) with the frame rule from
the original probabilistic separation logic~\cite{BHL}. 
The translation of specification~\eqref{violation} into the assertion logic of~\cite{BHL} produces a correct
specification according to their interpretation. However, the frame rule from~\cite{BHL} is not 
applicable,  because~\eqref{violation} fails the side-condition that the 
specification's precondition must imply that the starting state is defined on variables whose initial values are read by the program. It is this side-condition that prevents the  incorrect specification~\eqref{frame-violation} from being derived. The frame rule of~\cite{BHL}
 also has another side-condition, which places a restriction on the variables that 
are allowed to appear in the specification's postcondition. Neither of the two side-conditions under discussion appears
 in the frame rule of the present paper (Fig.~\ref{figure:frame-rule}),
because the safety guarantee and the derived property of relative tightness render them unnecessary.
 
 Our proof of soundness for the frame rule depends on the property of relative tightness (Theorem~\ref{theorem:independence}).
An analogue of this property is crucial to the proof of soundness for the frame rule in~\cite{BHL}; specifically, a ``soundness'' property for classes of variables, which says that behaviour of a program $C$  can be described as a probabilistic map from variables whose initial values are read by $C$ to variables that are written to by $C$ (Lemma~6(3) of \emph{op.\ cit.}). The analogy is that this probabilistic map establishes that  a portion of the final state (the write variables) is independent of the input state conditional on the projection of the latter to  the read variables. A major difference, however, is that our relative tightness is a derived property of \emph{specifications}, relating to portions of the input and output states determined by the pre-\ and postcondition respectively; whereas the analogous property in~\cite{BHL} is 
based on a taxonomy of program variables. One other difference is that our results apply to the full pwhile language with general unbounded while loops. This additional generality necessitates the termination argument that appears in our proof of relative tightness.

\section{Further work}
\label{section:proof}


Having established the soundness of the frame rule, the most pressing question is whether the framework introduced in this paper can serve as the basis for a practicable probabilistic separation  logic for verifying interesting probabilistic programs. 
Given our departure from the standard, resource-monoid-based formulation of probabilistic separation logic,
and our relaxation of the restrictions on imperative programs that are built into the design of the proof systems in~\cite{BHL,BGHT} 
(the special treatment of deterministic variables, and the restrictions to deterministic loop guards and to bounded loops),
there is no \emph{a priori} reason to expect that a reasonable such proof system exists.  Nevertheless, we are optimistic that one does.

As a first step in this direction, in his undergraduate project work~\cite{jereb}, the
first author has checked that the example verifications of cryptographic protocols from~\cite{BHL} do transfer to our
semantic framework. This has been carried out in a setting in which the
aforementioned restrictions on programs have been retained,  using
proof rules for partial correctness closely modelled on those in \emph{op.\ cit.}

\begin{figure}
\begin{gather*}
\begin{prooftree}
\{\Phi\} \, \If\ B\ \Then\ C\ \Else\ \Skip\, \{\Phi\}
\justifies
\{\Phi\} \, \While\ B\ \Do\ C\, \{[\neg B] \wedge \Phi\}
\using{\,\text{$\Phi$ topologically closed}}
\end{prooftree}
\end{gather*}
\caption{A provisional proof rule for general while loops}
\label{figure:proof-rules}
\end{figure}

Finding a more general proof system, catering for unrestricted programs, is an interesting 
research goal. We believe that our specification framework provides a promising basis for the development of such a system. 
For example, 
Figure~\ref{figure:proof-rules} presents a plausible 
 partial-correctness proof rule for while loops. It has a side condition that the loop invariant $\Phi$ must be \emph{topologically closed},
meaning that, for every sample space $\Omega$, the set 
$\{ \Sigma : \Omega \to \State \mid \Sigma \models \Phi\}$ is closed
in the topology of convergence-in-probability of random variables (which, because we are considering random variables valued in a discrete space $\State$, coincides with almost-sure convergence).
Regarding the full system we have in mind, we comment that we envisage a crucial interaction between conditioning modalities $\CM{E} \,\Phi$ in  the logic  and conditionals in the programming language,
and also a conditional generalisation of the frame rule, allowing  frame-rule-like inferences to be applied to conditional independence statements.
Such a system is the subject of current investigation.

One final  point for  investigation is  the extent to which our classical logic, based on a partial satisfaction relation, forms a
sufficiently expressive alternative to probabilistic separation logic based on partial resource monoids. For example, does it support a natural (and useful) version of the separating implication (``magic wand'') connective of separation logic?

\begin{ack}
We thank the anonymous MFPS reviewers for their helpful comments.
\end{ack}

\bibliographystyle{./entics}

\end{document}